\renewcommand{\bibinfo}[2]{%
  \ifstrequal{#1}{title}{\textit{#2}}{#2}}
\crefname{conjecture}{conjecture}{conjectures}
\crefname{box}{Box}{Box}
\newcommand{\calH}{\mathcal{H}}
\newcommand{\calP}{\mathcal{P}}
\newcommand{\calQ}{\mathcal{Q}}
\newcommand{\calS}{\mathcal{S}}
\newcommand{\calZ}{\mathcal{Z}}
\newcommand{\ps}[1]{\mathtt{#1}}
\newcommand{\tildecalP}{\widetilde{\calP}}
\newcommand{\tildecalQ}{\widetilde{\calQ}}
\newcommand{\tildecalS}{\widetilde{\calS}}
\newcommand{\nout}{n_{\rm out}}
\newcommand{\bs}[1]{\boldsymbol{#1}}
\newcommand{\on}[1]{\operatorname{#1}}
\newcommand{\dimspanmu}{s_{\bs\mu}}
\newcommand{\parTitle}[1]{\textit{#1 ---}}
\newtheorem{theorem}{Theorem}
\newtheorem{lemma}{Lemma}
\newtheorem{corollary}{Corollary}
\newtheorem{conjecture}{Conjecture}
\newtheorem{result}{Result}
\theoremstyle{definition}
\bfseries\color{black}
\begin{document}

\title{The non-stabilizerness cost of quantum state estimation}

\author{Gabriele Lo Monaco\,\orcidlink{0000-0002-3594-3477}}
\email{gabriele.lomonaco@unipa.it}
\let\comma,
\affiliation{Universit\`a degli Studi di Palermo\comma{} Dipartimento di Fisica e Chimica - Emilio Segr\`e\comma{} via Archirafi 36\comma{} I-90123 Palermo\comma{} Italy}

\author{Salvatore Lorenzo\,\orcidlink{0000-0002-0827-5549}}
\let\comma,
\affiliation{Universit\`a degli Studi di Palermo\comma{} Dipartimento di Fisica e Chimica - Emilio Segr\`e\comma{} via Archirafi 36\comma{} I-90123 Palermo\comma{} Italy}

\author{Alessandro Ferraro\,\orcidlink{0000-0002-7579-6336}}
\affiliation{Quantum Technology Lab\comma{} Dipartimento di Fisica Aldo Pontremoli\comma{} Universit\`a degli Studi di Milano\comma{} I-20133 Milano\comma{} Italy}

\author{Mauro Paternostro\,\orcidlink{0000-0001-8870-9134}}
\affiliation{Universit\`a degli Studi di Palermo\comma{} Dipartimento di Fisica e Chimica - Emilio Segr\`e\comma{} via Archirafi 36\comma{} I-90123 Palermo\comma{} Italy}
\affiliation{Centre for Quantum Materials and Technologies\comma{} School of Mathematics and Physics\comma{} Queen's University Belfast\comma{} BT7 1NN\comma{} United Kingdom}

\author{G. Massimo Palma\,\orcidlink{0000-0001-7009-4573}}
\let\comma,
\affiliation{Universit\`a degli Studi di Palermo\comma{} Dipartimento di Fisica e Chimica - Emilio Segr\`e\comma{} via Archirafi 36\comma{} I-90123 Palermo\comma{} Italy}

\author{Luca Innocenti\,\orcidlink{0000-0002-7678-1128}}
\let\comma,
\affiliation{Universit\`a degli Studi di Palermo\comma{} Dipartimento di Fisica e Chimica - Emilio Segr\`e\comma{} via Archirafi 36\comma{} I-90123 Palermo\comma{} Italy}

\begin{abstract}
\noindent
We study the non‑stabilizer resources required to achieve informational completeness in single-setting quantum state estimation scenarios.
We consider fixed-basis projective measurements preceded by quantum circuits acting on $n$-qubit input states, allowing ancillary qubits to increase retrievable information.
We prove that when only stabilizer resources are allowed, these strategies are always informationally equivalent to projective measurements in a stabilizer basis, and therefore never informationally complete, regardless of the number of ancillas.
We then show that incorporating $T$ gates enlarges the accessible information.
Specifically, we prove that at least ${2n}/{\log_2 3}$ such gates are necessary for informational completeness, and that $2n$ suffice.
We conjecture that $2n$ gates are indeed both necessary and sufficient.
Finally, we unveil a tight connection between entanglement structure and informational power of measurements implemented with $t$-doped Clifford circuits.
Our results recast notions of ``magic'' and stabilizerness --- typically framed in computational terms --- into the setting of quantum metrology.
\end{abstract}

\maketitle

Reconstruction protocols such as quantum reservoir computing~\cite{tanaka2019recent,fujii2020quantum,mujal2021opportunities,chen2020temporal}, quantum extreme learning machines (QELMs)~\cite{fujii2017harnessing,%
ghosh2019quantum,ghosh2021realising,
kutvonen2020optimizing,tran2020higherorder,krisnanda2021creating,%
rafayelyan2020largescale,rafayelyan2020largescale,
nokkala2021gaussian,
nakajima2019boosting,innocenti2023potential,lomonaco_extremePES,vetrano2025state,suprano2024experimental,zia2502quantum,xiong2023fundamental}, shadow tomography~\cite{aaronson2018ShadowTomographyQuantum,huang2020predicting,acharya2021ShadowTomographyBased,nguyen2022OptimizingShadowTomography,elben2022RandomizedMeasurementToolbox,innocenti2023shadow}, and even conventional linear state tomography~\cite{paris2004QuantumStateEstimation,dariano2003QuantumTomography,teo2015IntroductionQuantumStateEstimation}, share a technical backbone: they all involve linear post-processing of measurement data to infer properties of an unknown quantum state.
These schemes differ, however, in how the estimator is learned, what prior information is assumed, and the metrics used to assess performance.
Of particular interest are single-setting measurement strategies, which employ a fixed measurement device capable, in principle, of providing complete information about the measured states~\cite{renes2004Symmetric,allahverdyan2004Determining,bent2015Experimental,oren2017Quantum,stricker2022Experimental,zhao2025Efficient}.
These schemes notably avoid the need to reconfigure the measurement apparatus across exponentially many settings, thus offering significant experimental and computational advantages.
In all these cases, the dimension of the operator span of the POVM describing the overall measurement is what determines which features of the input states are retrievable. Nevertheless, beyond this abstract characterization, the specific quantum resources underlying the metrological power of these strategies has remained elusive.

In the context of quantum computation, a key resource that quantifies both simulation hardness and computational universality is \textit{non-stabilizerness} --- that is, a circuit’s ability to generate \textit{quantum magic}%
~\cite{leone2021renyi,Veitch2014resource,howard2017application,Wang2019magic,oliviero2022a,bravyi2005universal,howard2014ContextualitySuppliesMagic,oliviero2022Measuring}.
Prior work has analysed the structure of channels built with stabilizer resources~\cite{yashin2025Characterization}, efficient process-tomography methods for Clifford circuits~\cite{xue2023Efficient}, and the complexity of determining whether a state is a stabilizer~\cite{gross2021Schur}.
Other studies have proposed methods to build informationally-complete POVMs (IC-POVMs) from stabilizer and magic states~\cite{planat2017MagicInformationallyComplete,feng2022Stabilizer}, analysed the CNOT-cost of implementing IC-POVMs~\cite{you2025Circuit}, and studied the simulability of POVMs in terms of projective measurements without ancillas~\cite{oszmaniec2019Simulating}.
The generation of state designs with doped Clifford circuits and different Clifford orbits has also been analysed~\cite{zhang2025DesignsMagicaugmentedClifford,gross2021Schur}.
Yet, the metrological role of quantum magic remains largely unexplored. In particular, there is no general characterisation of the structure and completeness of POVMs generated by Clifford and magic-doped circuits.

In this work, we bridge this gap by showing the relation between non-stabilizerness budget --- quantified as the number of $T$ gates in the circuit~\cite{bravyi2005universal,leone2021renyi,jiang2023Lower,leone2024Stabilizer,ahmadi2024Mutual,haug2025Probing} --- and the amount of information retrievable from the corresponding measurement.
As said, we focus on single-setting scenarios, where the quantum circuit preceding the measurement is fixed, and no classical randomness is involved.
In particular, we achieve the following:
\begin{enumerate}
    \item We prove that any circuit that uses only Clifford gates and stabilizer ancillas is informationally equivalent to a direct projective measurement and thus cannot be IC.
    \item We prove that achieving informational completeness requires at least $2n/\log_23\approx 1.26 n\,$ $T$ gates, and that $2n$ suffice, where $n$ is the number of input qubits. We conjecture that $2n$ is in fact also the number of $T$ gates necessary for informational completeness.
    \item We prove a direct link between the entanglement of the Heisenberg-evolved stabilizer measurement states and the informational content of the resulting measurement.
\end{enumerate}
These findings show an interesting departure from the standard computational narrative surrounding quantum magic:
Clifford evolutions, though powerful for state manipulation, never yield more information than a straightforward computational‑basis read‑out;
conversely, whereas universal quantum computation demands an unbounded supply of $T$ gates, $2n$ such gates are already enough to guarantee informational completeness.

Another immediate application arises in the context of QELMs --- namely, quantum machine learning protocols that trade assumed knowledge of the dynamics for knowledge of a set of training states.
Our results tie the performance of QELMs to the non-stabilizerness of the underlying dynamics, thus allowing to better devise suitable reservoirs for the task one needs to solve.
Finally, from the perspective of shadow tomography, our results imply that stabilizer resources alone cannot be used to retrieve arbitrary observables on input states.
Note that this does not contradict standard results about the feasibility of shadow tomography with random Clifford circuits~\cite{huang2020predicting}, as these rely on measurements in multiple bases, obtained after evolution through many different Clifford circuits; by contrast, we consider POVMs obtained with only one such circuit.
However, our results might also help devising more efficient shadow tomography protocols, especially given the many recent efforts devoted to gain a better understanding of the number of Clifford circuits and the non-stabilizerness required to achieve high-quality performances with these schemes~\cite{haferkamp2023EfficientUnitaryDesigns,helsen2023ThriftyShadowEstimation,bertoni2024ShallowShadowsExpectation,bu2024ClassicalShadowsPauliinvariant,zhang2024MinimalCliffordShadow}.

The remainder of this paper is organized as follows.
\Cref{sec:summary_results} presents the technical results of the paper and serves as a reading guide, providing pointers to all sections and theorems.
\Cref{sec:setting} introduces the general framework and notation used throughout.
\Cref{sec:clifford_povms} analyses the estimation capabilities of POVMs built solely from stabilizer resources.
\Cref{sec:entanglement_and_reconstruction} discusses the role of entanglement in the measurement basis for general choices of initial ancillas.
\Cref{sec:tdoped_POVMs} investigates how these capabilities change when $T$ gates are added to the Clifford circuit.
Finally,~\cref{sec:conclusions} summarizes our findings and outlines possible venues for future work.


\section{Detailed summary of results}
\label{sec:summary_results}

We study measurements on $n$-qubit systems obtained by appending $m$ ancillary qubits, applying a unitary $U$ to the joint system, and then performing a projective measurement. This construction defines a POVM $\bs\mu=(\mu_b)_{b=1}^{\nout}$ with $\nout=2^{n+m}$ outcomes, whose explicit form is given in~\cref{eq:effective_POVM}.
Our main object of interest is the \textit{information content} of such POVMs, as quantified by the dimension of the linear operator span of their elements, $\dimspanmu\equiv\dim\on{span}(\{\mu_b\}_{b=1}^{\nout})$.
This equals the number of linearly independent observables reconstructable exactly from measurement data via linear post-processing.

For generic choices of unitaries, measurements, and ancilla states, such isometric extensions yield POVMs with larger $\dimspanmu$.
In particular, adding $n$ ancillas is both necessary and sufficient --- given suitable $U$ --- to obtain informationally complete (IC) POVMs.
In~\cref{sec:clifford_povms} we show that the opposite is true if only stabilizer resources are used. Specifically, in~\cref{thm:structural_clifford_povms,thm:structural_clifford_povms_plus} we prove:
\begin{result}[Stabiliser POVMs]
    Effective POVMs built solely from stabilizer resources are informationally equivalent to projective measurements in a stabilizer basis. They satisfy $\dimspanmu=2^n$, independently of the number of ancillas, and in particular are never IC.
\end{result}
The proof exploits the structure of the stabilizer groups characterizing both measurement and ancillas.
Indeed, a stabilizer measurement is defined as the projective measurement whose elements are the common eigenvector of some stabilizer group.
In the Heisenberg picture, Clifford evolutions map the stabilizer measurement into another stabilizer measurement, characterized by a new stabilizer group.
The stabilizer ancilla states, also characterized by some other stabilizer group, appear in this picture as a projection applied to the Heisenberg-evolved stabilizer states defining the measurement.
This allows to characterize the resulting effective POVMs, and explicitly derive the effective measurements from the defining stabilizer groups.

We then broaden the scope of our analysis in~\cref{sec:entanglement_and_reconstruction} to stabilizer measurements and Clifford circuits allowing now  arbitrary ancilla states.
In the Heisenberg picture, it is clear that the entanglement of the states corresponding to the POVM elements is determinant to the informational power of the measurement.
For example, if the measurement operators remain fully factorized, only local observables are accessible, and thus the POVM is not IC.
Leveraging the entanglement characterization of stabilizer states from~\cite{fattal2004entanglement}, we prove in~\cref{thm:entanglement_and_reconstruction} that this intuition can be made precise:
\begin{result}[Entanglement and informational power]
    For stabilizer measurements, Clifford circuits, and arbitrary ancillas, the effective POVM satisfies $2^n \le \dimspanmu \le 2^{n+p}$, where $p$ is the entanglement entropy of the Heisenberg-evolved measurement states.
    For almost all ancilla choices, $\dimspanmu=2^{n+p}$; special cases (e.g. stabilizer ancillas) may yield smaller values.
\end{result}
We further argue that the maximum $\dimspanmu$ is always achieved when the Heisenberg-evolved measurement basis is maximally entangled, independent of the ancillas state.
Note how this differs from the previous result: while $\dimspanmu\le 2^{n+p}$ holds for all ancilla states, it remains possible that for specific choices the maximum occurs at some $p<n$, and furthermore that for those ancillas $p=n$ yields smaller values of $\dimspanmu$.
Numerical evidence shows no such cases, and their existence would be surprising. The proof of this result is left as a \hyperref[conj:maxentmaxrank]{conjecture}.

In~\cref{sec:tdoped_POVMs} we drop the Clifford restriction and consider circuits with $t$ $T$ gates.
These are significantly harder to analyze since the Heisenberg-evolved POVM elements are no longer stabilizer states.
To address this, we exploit the gadget formalism: each $T$ gate is replaced by an ancilla, a CNOT, and a projection onto $\ket0$.
Because CNOTs are Clifford, this reduces the non-Clifford circuit to a Clifford one with extra ancillas and projections. The stabilizer formalism then applies again, provided we suitably track the projected ancillas.
With this method, together with a convenient way to characterize $\dimspanmu$ via specific subgroups of the stabilizer measurement group, we prove in~\cref{thm:easy_tn_bound,thm:tgen} that:
\begin{result}[$t$-doped POVMs]
    An effective POVM can be IC only if $t\ge 2n/\log_2 3$.
    Moreover, there exist circuits with $t=2n$ yielding IC POVMs, so $t\ge 2n$ is a sufficient condition.
\end{result}
Numerical evidence suggests $t=2n$ is also necessary: no $t$-doped POVM with fewer than $2n$ $T$ gates appears to be IC. This stronger claim is left as an unproven \hyperref[conj:2n]{conjecture}.
We also prove in~\cref{thm:rank_tlen} that for maximally entangled measurements with $t\le n$, the largest achievable $\dimspanmu$ among all $t$-doped POVMs is $\max\dimspanmu=2^n (3/2)^t$, whereas for $t>n$ we prove in~\cref{thm:rank_tgen} that $\max\dimspanmu\ge 2^{-\ell}(3^{a+1}-1)^r(3^a-1)^{\ell-r}$, with $a\equiv\lfloor t/\ell]$, $r\equiv t-a\ell$, and $\ell\equiv t-n$.

\section{Setting and notation}
\label{sec:setting}

\parTitle{Physical and effective POVMs}
Our aim is to characterize how much information can be extracted about an $n$-qubit input state $\rho$ with measurement protocols consisting of
(1) embedding $\rho$ into a larger Hilbert space by appending $m$ ancillary qubits;
(2) evolving the overall state by a quantum channel $\Phi$;
(3) measuring the output state with a projective POVM $\bs\mu^{\rm phys}\equiv (\mu_b^{\rm phys})_{b=1}^{n_{\rm out}}$, where $\nout=2^{n+m}$ is the number of outcomes.
We will refer to the $n$-qubit register holding the input state to measure as the \textit{data qubits}.
The Hilbert spaces of data qubits and ancillas are denoted with $\calH_n$ and $\calH_m$, respectively.
If the only quantity of interest is the information that can be recovered about $\rho$, then the distinction between the initial ancilla state, the dynamics, and the physical measurement, becomes moot. Operationally, all that matters is the \textit{effective measurement} $\bs\mu\equiv(\mu_b)_{b=1}^{n_{\rm out}}$.
This is the POVM in the Heisenberg picture, given by $\mu_b = \Phi^\dagger(\mu_b^{\rm phys})$, with $\Phi^\dagger$ the adjoint map of $\Phi$.
The main quantity we are interested in is the dimension of the linear span of the elements of the effective POVM.
We shall denote this quantity with $\dimspanmu\equiv\dim\on{span}(\{\mu_b\}_{b=1}^{\nout})$.
This equals the number of linearly independent observables that can be written as linear combinations of the measurement probabilities given by $\bs\mu$.
Informationally-complete (IC) POVMs are thus characterised by $\dimspanmu=2^{2n}$.
A schematic representation of the considered measurement protocols is given in~\cref{fig:apparatus}.

\begin{figure}[tb]
    \centering
    \includegraphics[width=0.9\linewidth]{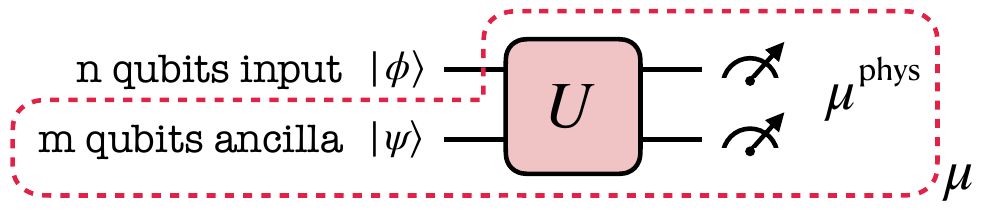}
    \caption{Schematic representation of the types of measurements considered in the paper, with $\Phi$ assumed to be a unitary evolutino $U$.}
    \label{fig:apparatus}
\end{figure}

\parTitle{Explicit form of the effective POVMs}
We specialize in particular to channels of the form $\Phi(\rho) = U(\rho\otimes\rho_R)U^\dagger$, with ancillas $\rho_R=\mathbb{P}_\psi\equiv\ketbra\psi$ initialized in a fixed pure states $\ket\psi$, and $U$ as a unitary transformation acting on the joint $(n+m)$-qubit system.
The physical measurement is assumed to be projective in an orthonormal basis $\ket*{\Phi_b'}$, so that $\mu^{\rm phys}_b=\mathbb{P}_{\Phi_b'}$.
Under these assumptions, the effective POVM elements take the form
\begin{equation}\label{eq:effective_POVM}
\begin{gathered}
    \mu_b = \Trace_R[U^\dagger \mathbb{P}_{\Phi_b'} U (\mathbb I\otimes \rho_R)]
    \\= (I\otimes \bra\psi)\mathbb{P}_{\Phi_b}(I\otimes\ket\psi),
\end{gathered}
\end{equation}
where the partial trace is taken over the ancillary qubits, and we denoted with $\mathbb{P}_{\Phi_b}\equiv\ketbra{\Phi_b}$, $\ket{\Phi_b}\equiv U^\dagger \ket{\Phi_b'}$ the Heisenberg-evolved measurement basis elements in the enlarged space.
We call POVMs of the form in~\cref{eq:effective_POVM} \emph{stabilizer POVMs} when they use only stabilizer resources — that is, when $U$ is a Clifford operator, $|\psi\rangle$ is a stabilizer state, and $\{|\Phi_b\rangle\}_b$ is a stabilizer basis.
Note that if $U$ is Clifford, then $\{\ket{\Phi_b}\}_b$ is a stabilizer basis iff $\{\ket{\Phi_b'}\}_b$ is.
If instead $U$ is $t$-doped, meaning that it contains $t$ $T$ gates, we will refer to the resulting POVMs as \textit{$t$-doped POVMs}.
We still assume stabilizer ancillas and stabilizer physical POVM for $t$-doped POVMs, because any non-stabilizerness can be reabsorbed into $U$, and thus there is no loss of generality in studying the general features of~\cref{eq:effective_POVM} where $U$ is the only non-stabilizer component.

\parTitle{Stabilizers formalism}
Here we provide a brief overview of the main concepts and notation related to the formalism of stabilizers states that will be useful in this paper.
Let $\calP_1$ be the {Pauli group}, that is the group of single-qubit Pauli operators with coefficients taken in $\{\pm1,\pm i\}$, so that $\lvert\calP_1\rvert=16$.
We will denote with $\calP_n\equiv \calP_1^{\otimes n}$ the corresponding group of $n$-qubit Pauli operators.
In many applications, the phases attached to each Pauli operator are immaterial, so that only the structure of $\calP_n$, modulo such phases, is relevant. We then work with the group of \textit{Pauli strings} $\tildecalP_n$, defined as the quotient of $\calP_n$ over the phases: $\tildecalP_n\equiv \calP_n/\{\pm1,\pm i\}$. We will denote the elements of $\calP_n$ using a standard font, \textit{e.g.} $X\in\calP_1$, $YZ\in\calP_2$, and use instead a monospaced font to denote the elements of $\tildecalP_n$, e.g. $\ps{I}\in\tildecalP_1$, $\ps{XX}\in\tildecalP_2$, etc. (here $\ps{I}$ and $I$ denote the identity operator, while $\ps{K}$ and $K$ the $k$-Pauli matrix for $k=x,y,z$).
With this notation, we thus have $X\cdot Y=iZ$ but $\ps{X}\cdot \ps{Y}=\ps{Z}$.

A \textit{stabilizer group} over $n$ qubits is defined as a maximal Abelian subgroup $\calS\le \calP_n$ such that, for some state $\ket\psi$, we have $g\ket\psi=\ket\psi,~\forall g\in\calS$.
Equivalently, a stabilizer group can be defined as a maximal Abelian subgroup of $\calP_n$ that does not contain $-I$.
A \textit{stabilizer state} is a pure state that is stabilised by some stabilizer group.
We will instead talk of a \textit{stabilizer basis} to refer to an orthonormal basis of pure (stabilizer) states that are the common eigenvectors of some stabilizer group.
Because in such definition the phases attached to the elements of a stabilizer group $\calS\le\calP_n$ are not relevant, we will associate stabilizer bases to \textit{stabilizer groups modulo phases}, that is, to $\calS\le\tildecalP_n$.
So for example, the Bell state $\ket{00}+\ket{11}$ is the stabilizer state corresponding to the stabilizer group $\calS=\langle XX,ZZ\rangle=\{I,XX,-YY,ZZ\}\le\calP_2$, while the set of four Bell states is the stabilizer basis associated to $\calS=\langle\ps{XX},\ps{ZZ}\rangle=\{\ps{I},\ps{XX},\ps{YY},\ps{ZZ}\}\le\tildecalP_2$.
We will refer to both $\calS\le\calP_n$ and $\calS\le\tildecalP_n$ as ``stabilizer groups'', specifying which notion we are referring to when needed.
Sporadically, we will simplify the notation for the sake of ease of argument.
For instance, given $g\in\tildecalP_n$, we will write $I\pm g$ where it is clear that these are to be interpreted as proper operators $I\pm g\in\calP_n$, although we should more formally denote such objects as $I\pm \pi(g)\in\calP_n$ with $\pi:\tildecalP_n\to\calP_n$ the natural lifting from the quotient space to the space of actual Pauli operators.
Generally speaking, whenever a sign or sum is applied to an element of $\tildecalP_n$, we assume that such natural lifting has been employed.
We refer to~\cref{app:stab} for a more detailed review of the necessary background on the stabilizer formalism.

\section{Reconstruction with stabilizer POVMs}
\label{sec:clifford_povms}

In this Section, we prove that every stabilizer POVM is informationally equivalent to a projective measurement in a stabilizer basis.
The main structural result is~\cref{thm:structural_clifford_povms}, which gives a short argument for this equivalence.
We then strengthen the result through~\cref{thm:structural_clifford_povms_plus}, using a different approach that makes the structure of the effective POVMs more explicit and provides a concrete recipe to compute them for any chosen stabilizer basis and projection.
These results imply, in particular, that stabilizer POVMs are never IC, and always give $\dimspanmu=2^n$.
This contrasts with the case of general $U$ and $\ket\psi$: POVMs of the form in~\cref{eq:effective_POVM} can generally --- and almost always with random $U$ --- have $\dimspanmu=\min(2^{n+m},4^n)$, and thus become IC for $m\ge n$.
Our result shows that, with stabilizer resources, the opposite holds: $\dimspanmu$ does not depend on the number of ancillas.

We first show in~\cref{lemma1} that local projections cannot \textit{decrease} the information retrievable from a rank-1 POVM.
Physically, this means that adding ancillas, evolving the full system unitarily, and then performing a projective measurement, provide at least as much information about the system as a direct projective measurement on it, \textit{i.e.} $\dimspanmu\ge 2^n$.
This tells us that it is not possible to have too much information hiding in unobserved correlations between the output qubits.
\begin{lemma}\label{lemma1}
    Let $\bs\mu\equiv(\mu_b)_{b=1}^{dd'}$ be a rank-1 POVM of the form $\mu_b = (I\otimes\bra\psi)\ketbra{\Phi_b}(I\otimes\ket\psi)$ with $\ket{\Phi_b}\in\mathbb{C}^{d\times d'}$ some orthonormal basis, and $\ket\psi\in\mathbb{C}^{d'}$ some ancillary state that is projected before the measurement.
    Then $\dimspanmu\ge d$.
\end{lemma}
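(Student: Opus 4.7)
The plan is to exploit the rank-1 structure explicitly by writing $\mu_b = \ketbra{\phi_b}$ with $\ket{\phi_b} \equiv (I\otimes\bra\psi)\ket{\Phi_b} \in \mathbb{C}^d$, and then reduce the claim about the span of operators to a statement about vectors in $\mathbb{C}^d$. Specifically, I would first show that $\{\ket{\phi_b}\}_b$ already spans all of $\mathbb{C}^d$, and then argue that any $d$ linearly independent vectors extracted from this collection yield $d$ linearly independent rank-1 projectors, which gives the desired bound $\dimspanmu \ge d$.

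For the spanning step, the natural strategy is contradiction: suppose some non-zero $\ket{v} \in \mathbb{C}^d$ is orthogonal to every $\ket{\phi_b}$. Then $\braket{v}{\phi_b} = (\bra{v}\otimes\bra\psi)\ket{\Phi_b} = 0$ for all $b$, and since $\{\ket{\Phi_b}\}_b$ is an orthonormal basis of $\mathbb{C}^d \otimes \mathbb{C}^{d'}$, this forces $\ket{v}\otimes\ket\psi = 0$, which contradicts $\ket\psi \ne 0$ unless $\ket{v} = 0$. Hence $\on{span}\{\ket{\phi_b}\} = \mathbb{C}^d$, and one can extract a subset $\ket{\phi_{i_1}},\ldots,\ket{\phi_{i_d}}$ forming a basis of $\mathbb{C}^d$.

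The main obstacle is the second step: linear independence of vectors does not automatically pass to their outer products, so some work is needed to confirm that the chosen rank-1 projectors are themselves linearly independent. The cleanest route is to assume $\sum_{k=1}^d c_k \ketbra{\phi_{i_k}} = 0$ and apply this operator identity to an arbitrary basis vector $\ket{\phi_{i_l}}$, obtaining $\sum_k c_k \braket{\phi_{i_k}}{\phi_{i_l}} \ket{\phi_{i_k}} = 0$. Linear independence of the $\ket{\phi_{i_k}}$ then forces each scalar coefficient $c_k\braket{\phi_{i_k}}{\phi_{i_l}}$ to vanish; specializing to $l=k$ and using $\braket{\phi_{i_k}}{\phi_{i_k}} > 0$ yields $c_k = 0$ for every $k$. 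The $d$ projectors therefore span a $d$-dimensional subspace of $\on{span}\{\mu_b\}$, which concludes the argument.
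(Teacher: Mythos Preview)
Your proof is correct and follows essentially the same route as the paper: both hinge on the observation that any vector orthogonal to every $\ket{\phi_b}$ would, after tensoring with $\ket\psi$, be orthogonal to the full orthonormal basis $\{\ket{\Phi_b}\}$, which is impossible. The paper runs the contradiction directly at the level of $\dimspanmu$, asserting that $r<d$ forces the existence of such a $\ket{\phi_\perp}$; you instead first establish that the vectors span $\mathbb{C}^d$ and then supply an explicit argument that $d$ linearly independent vectors yield $d$ linearly independent rank-one operators --- a step the paper leaves implicit but which your version makes watertight.
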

\begin{proof}
    Suppose by contradiction that $r<d$, and let $\ket{\phi_b}\propto (I\otimes \bra\psi)\ket{\Phi_b}$.
    There must then be some $\ket{\phi_\perp}$ such that $\langle \mathbb{P}_{\phi_\perp},\mathbb{P}_{\phi_b}\rangle=0$ for all $b$, or equivalently, such that $\langle\phi_\perp|\phi_b\rangle=0$ for all $b$.
    But, then we could build the vector $\ket{\Phi'}\equiv \ket{\phi_\perp}\otimes \ket\psi$, and observe that this satisfies
    \begin{equation}
        \braket{\Phi'}{\Phi_b}
        = \sqrt{w_b}\langle \phi_\perp| \phi_b\rangle
        = 0, \,\,\forall b.
    \end{equation}
    But this is a contradiction on account of $\ket{\Phi_b}$ being a basis.
\end{proof}

Let us now focus on Clifford evolutions with stabilizer measurements and ancillas. In this case, we find that adding ancillas also does not \textit{increase} $\dimspanmu$:
\begin{theorem}\label{thm:structural_clifford_povms}
Any stabilizer POVM is informationally equivalent to a projective measurement in an orthonormal basis of stabilizer states --- and thus has $\dimspanmu=2^n$.
\end{theorem}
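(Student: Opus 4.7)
The plan is to combine the lower bound of \cref{lemma1} with an upper bound derived from the algebraic structure of stabilizer POVMs, forcing equality at $2^n$. The starting point is to plug the standard stabilizer-group expansions $\mathbb{P}_{\Phi_b} = 2^{-(n+m)}\sum_{g\in\calS_\Phi}\epsilon_b(g)\,g$ and $\ketbra\psi = 2^{-m}\sum_{h\in\calS_\psi} h$ into~\cref{eq:effective_POVM}, where $\calS_\Phi\le\tildecalP_{n+m}$ and $\calS_\psi\le\tildecalP_m$ are the stabilizer groups of the measurement basis and the ancilla state, and $\epsilon_b\in\{\pm1\}^{\calS_\Phi}$ is the sign character labeling $\ket{\Phi_b}$. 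Using the canonical tensor factorization $g = g_A\otimes g_R$ of any Pauli string $g\in\tildecalP_{n+m}$, this yields
\begin{equation*}
\mu_b = \frac{1}{2^{n+m}}\sum_{g\in\calS_\Phi}\epsilon_b(g)\,\bra\psi g_R\ket\psi\,g_A.
\end{equation*}
The coefficient $\bra\psi g_R\ket\psi$ equals $\pm1$ when $\pm g_R\in\calS_\psi$ and vanishes otherwise, so the sum is supported on the subgroup $\calT \equiv \{g\in\calS_\Phi : \pm g_R\in\calS_\psi\}\le\calS_\Phi$, and each $\mu_b$ lies in the linear span of the image $\calT_A \equiv \{g_A : g\in\calT\}\subseteq\tildecalP_n$.

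The next key step is to show that $\calT_A$ is abelian. For any $g,g'\in\calT$, both the full operators (since $\calS_\Phi$ is abelian) and their $R$-factors (since $\calS_\psi$ is abelian up to sign) commute. Because two tensor-product Paulis $g_A\otimes g_R$ and $g'_A\otimes g'_R$ commute if and only if the pairs $(g_A,g'_A)$ and $(g_R,g'_R)$ have matching commutation signs, commuting at both the global and $R$-levels forces $[g_A,g'_A]=0$. Hence $\calT_A$ is an abelian subgroup of $\tildecalP_n$, and so $|\calT_A|\le 2^n$; since distinct Pauli strings are linearly independent as operators, $\dimspanmu\le|\calT_A|\le 2^n$.

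Combined with $\dimspanmu\ge 2^n$ from \cref{lemma1}, this gives $\dimspanmu=|\calT_A|=2^n$. Therefore $\calT_A$ is a maximal abelian subgroup of $\tildecalP_n$, i.e., a stabilizer group on the data qubits; letting $\{\ket{\chi_a}\}_a$ denote its common eigenbasis (a stabilizer basis), one has $\on{span}\{\mu_b\}_b=\on{span}(\calT_A)=\on{span}\{\mathbb{P}_{\chi_a}\}_a$, which is exactly the informational equivalence claimed.

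The main obstacle is the commutativity argument for $\calT_A$: it relies crucially on the $R$-parts of the elements of $\calT$ all lying in an abelian group, not merely on $\calS_\Phi$ itself being abelian. Projecting an arbitrary abelian tensor-product subgroup of $\tildecalP_{n+m}$ onto one factor generally destroys commutativity, and it is the extra constraint imposed by the ancilla's stabilizer group that rescues the argument. A secondary technical nuisance is tracking Pauli phases cleanly when passing between $\calP$ and $\tildecalP$, best handled by working with Pauli strings throughout and keeping $\bra\psi g_R\ket\psi$ as an explicit signed coefficient.
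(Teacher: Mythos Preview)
Your proof is correct and follows essentially the same route as the paper's: both expand the effective POVM elements over the stabilizer group, isolate the subgroup $\calT=\{g\in\calS_\Phi:\pm g_R\in\calS_\psi\}$ (the paper's $\calS_\calZ$), argue that its projection onto the data qubits is abelian via the matching-commutation-sign observation, and then pin $\dimspanmu=2^n$ by combining the resulting upper bound with \cref{lemma1}. The only cosmetic difference is that the paper introduces the Fourier-transformed operators $\mu_g'=\pi_n(g)\langle\psi|\pi_m(g)|\psi\rangle$ as an explicit intermediate family with the same span as $\{\mu_b\}$, whereas you work directly with the expansion of $\mu_b$; the logic is otherwise identical.
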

\begin{proof}
    We consider POVMs of the form given in ~\cref{eq:effective_POVM} with $\{\ket{\Phi_b}\}_b$ a stabilizer measurement basis over $n+m$ qubits, and $\ket\psi$ an $m$-qubit stabilizer state.
    Let the stabilizer group characterizing $\{\ket{\Phi_b}\}_b$ be $\calS=\langle g_1,...,g_{n+m}\rangle\subset\tildecalP_{n+m}$, with $g_i$ independent generators, and let $\calZ\le \calP_m$ be the stabilizer group of $\ket\psi$.
    Here and in the following, we denote with $\calH_n$ the $n$-qubit Hilbert space of the data qubits, and with $\calH_m$ the $m$-qubit Hilbert space of the ancillas.

    Define the modified operators $\mu_g'\equiv \Trace_2[g(I\otimes\mathbb{P}_\psi)]$, and note that there is a linear isomorphism between $(\mu_b)_{b\in\{0,1\}^{n+m}}$ and $(\mu_g')_{g\in \calS}$. This follows from the relation between stabilizer states and corresponding stabilizer group: we have $\mathbb{P}_{\Phi_b} = \prod_{k=1}^{n+m} \frac{I+(-1)^{b_k}g_k}{2}$, and therefore
    \begin{equation}
    \begin{gathered}
        \mathbb{P}_{\Phi_b}
        = \frac1{2^{n+m}} \sum_{\xi\in\{0,1\}^{n+m}}(-1)^{\xi\cdot b} g^\xi, \\
        g^\xi = \sum_{b\in\{0,1\}^{n+m}} (-1)^{\xi\cdot b} \mathbb{P}_{\Phi_b},
    \end{gathered}
    \end{equation}
    where $\xi\cdot b\equiv\sum_{k=1}^{n+m}\xi_k b_k\pmod2$ is the standard scalar product in $\mathbb{Z}_2^{n+m}$, and we used the shorthand $g^\xi\equiv \prod_{k=1}^{n+m} g_k^{\xi_k}$ to index the elements of $\calS$.
    This map is the group Fourier transform of the stabilizer group elements.
    The same identical coefficients are used to map between $(\mu_b)$ and $(\mu_g')$ and vice versa. Namely,
    \begin{equation}\label{eq:phys_vs_prime_POVMs}
    \begin{gathered}
        \mu_b=2^{-n-m}\sum_\xi (-1)^{\xi\cdot b}\mu_{g^\xi}',
        \\
        \mu_{g^\xi}'=\sum_b (-1)^{\xi\cdot b}\mu_b.
    \end{gathered}
    \end{equation}
    In particular, both sets span the same space: $\on{span}(\{\mu_b\})=\on{span}(\{\mu_g^\prime\})$.

    Writing each $g\in\calS$ as $g=\pi_n(g)\otimes \pi_m(g)$, with $\pi_n(g)$, $\pi_m(g)$ the projections onto $\calH_n$ and $\calH_m$, respectively, we have $\mu_g'=
    \pi_n(g)\langle\psi|\pi_m(g)|\psi\rangle$.
    This shows that $\mu_g'=\pi_n(g)$ iff $\pm \pi_m(g)\in\calZ$, and $\mu_g'=0$ otherwise.
    The problem of determining the information retrievable from the measurement is thus reduced to that of determining the dimension of the subgroup
    \begin{equation}\label{eq:gn_for_ginS}
        \calS_\calZ\equiv \{\pi_n(g)\in\tildecalP_n: \, g\in\calS,\,\,\pm \pi_m(g)\in\calZ\}.
    \end{equation}
    For any $g_1,g_2\in\calS$, $\pi_n(g_1)$ and $\pi_n(g_2)$ (anti) commute iff $\pi_m(g_1)$ and $\pi_m(g_2)$ do.
    It follows that $\calS_\calZ$ must be an abelian subgroup of $\tildecalP_n$, and thus have $\dimspanmu\le 2^n$.
    But~\cref{lemma1} implies $\dimspanmu\ge 2^n$, hence we conclude that $\dimspanmu=2^n$.
    As the elements $\mu_g'$ are all Pauli operators in $\tildecalP_n$, the effective measurement must be informationally equivalent to a projective measurement on some $n$-qubit stabilizer basis.
\end{proof}
Note that~\cref{thm:structural_clifford_povms} does not imply that $\bs\mu$ itself must be a projective measurement on $\calH_n$: there are precisely $2^n$ distinct nonzero operators $\mu_g'$, but nonetheless the number of nonzero elements $\mu_b$ can be larger.
In fact, $\bs\mu$ cannot generally be a projective measurement, because it has $2^{n+m}>2^n$ outcomes.
The equivalence of the POVMs is thus understood here in the sense of~\cite{zhu2022Quantum}: two POVMs are said to be equivalent when they provide exactly the same informational content, despite their elements not necessarily being identical as operators. This is akin to how the three-outcome single-qubit POVM $\{\frac12\mathbb{P}_0,\frac12\mathbb{P}_0,\mathbb{P}_1\}$ is \textit{informationally equivalent} to a $Z$-basis measurement, despite not being a projective measurement \textit{per se}.
    
While the above reasoning is sufficient to prove the result, it is instructive to consider a different approach to the proof that does not explicitly involve~\cref{lemma1}, exposes the structure of the effective POVM, and sheds more light on the origin of the stated equivalence.
To set up the stage, we must first state the following:

\begin{lemma}\label{lemma:intersection_S_and_Z}
    Let $\calS\le\tildecalP_{\nu}$ be a maximal abelian subgroup, and let $\calZ\le\tildecalP_{\nu}$ be another abelian subgroup, with $\dim(\calS)=\nu$ and $\dim(\calZ)=m\le \nu$.
    Then we can find independent generators for $\calS$ and $\calZ$ such that
    \begin{equation}
    \begin{gathered}
        \calZ = \langle \{h_k\}_{k=1}^\ell\cup \{\tilde h_k\}_{k=1}^{m-\ell} \rangle, \\
        \calS = \langle
        \{h_k\}_{k=1}^\ell \cup \{\tilde g_k\}_{k=1}^{m-\ell}
        \cup \{ g_k\}_{k=1}^{\nu-m}
        \rangle,
    \end{gathered}
    \end{equation}
    where $\langle h_1,\dots,h_\ell\rangle=\calS\cap\calZ$, $\langle h_1,\dots,h_\ell,g_1,\dots, g_{\nu-m}\rangle=\calS\cap C(\calZ)$, and $\{ \tilde h_j,\tilde g_k\}=0$ iff $j=k$. Here $C(\calZ)$ is the centralizer of $\calZ$, that is, the set of $P\in \tildecalP_\nu$ such that $[P,\calZ]=0$.
\end{lemma}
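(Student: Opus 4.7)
The plan is to work in the standard symplectic picture of $\tildecalP_\nu$: identify $\tildecalP_\nu$ with $\mathbb{F}_2^{2\nu}$, equipped with the symplectic form $\omega(P,Q)\in\mathbb{F}_2$ that records whether $P,Q$ commute ($\omega=0$) or anticommute ($\omega=1$). In this language abelian subgroups correspond to isotropic subspaces, centralizers are symplectic orthogonals $C(X)=X^\perp$, and $\calS$ being a maximal abelian subgroup of dimension $\nu$ is exactly the statement that $\calS$ is Lagrangian, $\calS=\calS^\perp$. Everything then reduces to linear algebra in this $\mathbb{F}_2$-symplectic space.

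The first step is to pin down $\dim(\calS\cap C(\calZ))$. Consider the restricted bilinear pairing $\omega\colon\calS\times\calZ\to\mathbb{F}_2$ and its two induced linear maps, $\phi\colon\calS\to\calZ^*$, $\phi(g)(h)=\omega(g,h)$, and $\psi\colon\calZ\to\calS^*$. By definition $\ker\phi=\calS\cap C(\calZ)$, whereas $\ker\psi=\calZ\cap C(\calS)=\calZ\cap\calS$, the latter because $\calS$ is Lagrangian. Since $\phi$ and $\psi$ encode the same bilinear form, they share a rank, giving $\on{rank}\phi=\on{rank}\psi=\dim\calZ-\ell=m-\ell$ and hence $\dim(\calS\cap C(\calZ))=\nu-(m-\ell)=\ell+(\nu-m)$. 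This is precisely the count of generators of $h$-type plus $g$-type demanded by the statement.

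The generators are then built by successive extension. Pick $h_1,\dots,h_\ell$ generating $\calS\cap\calZ$; extend them to a generating set $h_1,\dots,h_\ell,g_1,\dots,g_{\nu-m}$ of $\calS\cap C(\calZ)$, and separately extend the $h_k$ to a generating set $h_1,\dots,h_\ell,\tilde h_1,\dots,\tilde h_{m-\ell}$ of $\calZ$. The symplectic form descends to a pairing between the quotients $\calS/(\calS\cap C(\calZ))$ and $\calZ/(\calS\cap\calZ)$, both of dimension $m-\ell$. Because the radicals of the restricted pairing are precisely what we have quotiented out, the induced pairing is non-degenerate. A standard dual-basis argument over $\mathbb{F}_2$ then yields $\omega$-dual bases of the two quotients; choosing any lifts gives $\tilde g_1,\dots,\tilde g_{m-\ell}\in\calS$ and (a possibly redefined) $\tilde h_1,\dots,\tilde h_{m-\ell}\in\calZ$ with $\omega(\tilde g_j,\tilde h_k)=\delta_{jk}$, equivalently $\{\tilde h_j,\tilde g_k\}=0$ iff $j\ne k$. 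Any modification of the $\tilde h_k$ stays within a coset of $\calS\cap\calZ$ in $\calZ$, so it leaves the generating property of $h_1,\dots,h_\ell,\tilde h_1,\dots,\tilde h_{m-\ell}$ intact; analogously for the $\tilde g_k$ in $\calS$.

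The main obstacle I anticipate is the dimension identity $\dim(\calS\cap C(\calZ))=\ell+(\nu-m)$, which forces the number of $\tilde g_k$ to equal the number of $\tilde h_k$ and underlies the symmetry of the decomposition; it relies crucially on $\calS$ being maximal abelian via $C(\calS)=\calS$, and would fail for merely abelian $\calS$. Once that count is secured, the remainder is routine symplectic linear algebra over $\mathbb{F}_2$, with only a little bookkeeping needed to verify that the chosen lifts preserve the generating structures of $\calS$ and $\calZ$.
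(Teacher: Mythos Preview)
Your argument is correct and takes a cleaner, more abstract route than the paper's. Both proofs set up the chain $\calS\cap\calZ\le\calS\cap C(\calZ)\le\calS$ and choose generators by successive extension; the difference lies in how the key identity $\dim\bigl(\calS/(\calS\cap C(\calZ))\bigr)=m-\ell$ is established. The paper proves it constructively via an explicit iterative pairing procedure (essentially a symplectic Gram--Schmidt): for each $\tilde h_k$ it locates and isolates a unique anticommuting $\tilde g_k$ by multiplying generators together, thereby producing the dimension count and the dual pairs simultaneously. You instead obtain the dimension in one stroke from rank--nullity on the bilinear pairing $\omega\colon\calS\times\calZ\to\mathbb{F}_2$ (using $\ker\phi=\calS\cap C(\calZ)$ and $\ker\psi=\calZ\cap\calS$ via the Lagrangian condition $\calS=\calS^\perp$), and then invoke the existence of dual bases for the induced non-degenerate pairing on the quotients. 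Your version is shorter and makes the role of $\calS$ being maximal abelian fully transparent; the paper's version is algorithmic, which is convenient for the explicit worked examples that follow. One small slip in your write-up: with your convention that $\omega=1$ signals anticommutation, the relation $\omega(\tilde g_j,\tilde h_k)=\delta_{jk}$ translates to $\{\tilde h_j,\tilde g_k\}=0$ iff $j=k$, matching the lemma's statement, not ``iff $j\ne k$'' as you wrote.
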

\begin{proof}
    This statement can be seen as a direct consequence of Theorem 1 in Ref.~\cite{fattal2004entanglement}, taking as group the free product $\calS*\calZ$, whose maximal Abelian subgroup is $\calS$, and whose center is $\calS\cap C(\calZ)$. For completeness, we nonetheless provide a full tailor-made proof here.

    Observe that $\calS$ contains the chain of subgroups $\calS\cap\calZ\le \calS\cap C(\calZ)\le \cal S$, and thus we have an isomorphism
    \begin{equation}
    \begin{gathered}
        \calS \simeq \calS_\calZ \times (\calS_{C(\calZ)}/\calS_\calZ)
        \times (\calS/\calS_{C(\calZ)}), \\
        \calS_\calZ\equiv \calS\cap\calZ,
        \quad
        \calS_{C(\calZ)}\equiv \calS\cap C(\calZ),
    \end{gathered}
    \end{equation}
    where $(\cdot)\times(\cdot)$ denotes the external product of groups.
    More explicitly in terms of the generators, given $\dim(\calS/(\calS\cap C(\calZ)))=s$, we can find independent generators for $\calS$ divided as
    \begin{equation}
        \{h_k\}_{k=1}^\ell \cup \{\tilde g_k\}_{k=1}^{s} \cup \{ g_k\}_{k=1}^{\nu-\ell-s},
    \end{equation}
    where $\{h_k\}$ spans $\calS\cap\calZ$, $\{ h_k\}\cup \{ g_k\}$ spans $\calS\cap C(\calZ)$ with $g_k\in \calS\cap C(\calZ)\setminus \calZ$, and $\{ h_k\}\cup \{ g_k\}\cup \{\tilde g_k\}$ span $\calS$ with $\tilde g_k\in\calS\setminus C(\calZ)$.
    Explicitly, this is done taking $g_k$ as representative of a set of independent generators $[g_k]\equiv g_k(\calS\cap \calZ)$ for the quotient space $ (\calS\cap C(\calZ))/(\calS\cap \calZ)$, and $\tilde g_k$ as representatives for a set of independent generators $[\tilde g_k]\equiv \tilde g_k(\calS\cap C(\calZ))$ for the quotient space $ \calS/(\calS\cap C(\calZ))$. Similarly, $\calS\cap\calZ\le \calZ$ and thus we can find independent generators for $\calZ$ of the form $\{ h_k\}_{k=1}^\ell\cup \{\tilde h_k\}_{k=1}^{m-\ell}$, with $\tilde h_k\in\calZ\setminus \calS$.

    To prove the Lemma it remains to show that $s=m-\ell$, and that $\{\tilde h_k\}_{k=1}^{m-\ell}$ and $\{ \tilde g_k\}_{k=1}^{m-\ell}$ can be given the ``diagonal anticommutation pattern'' as per the statement.
    To this end, we can follow an iterative procedure to take any such pair of generators, and modify them to ensure they satisfy the given properties:
    \begin{enumerate}
        \item Start with any $\tilde h_1\in\calZ\setminus\calS$ and note that there is at least one $\tilde g_k$ such that $\{\tilde h_1,\tilde g_k\}=0$. If there is more than one, change the basis multiplying the anticommuting ones together so that only a single generators, call it $\tilde g_1$, is left to anticommute with $\tilde h_1$.
        \item Take the next generator $\tilde h_2$. If $\{\tilde h_2,\tilde g_1\}=0$, then replace $\tilde h_2\to \tilde h_1\tilde h_2$, so that the new $\tilde h_2$ commutes with $\tilde g_1$. We now proceed as in the previous step looking for a unique generator in the subset $\{\tilde g_2,\dots,\tilde g_s\}$ that anticommutes with $\tilde h_2$.
        \item Repeat the above process for all $m-\ell$ generators $\tilde h_k$, at each step first ensuring that $\tilde h_k$ commutes with all $\tilde g_1,\dots,\tilde g_{k-1}$, and then ensures it anticommmutes with a single other generator denoted $\tilde g_k$.
    \end{enumerate}
    Note that in the above procedure there must always be a new generator $\tilde g_k$ that anticommutes with $\tilde h_k$, as otherwise we would have $\tilde h_k\in\calS\cap\calZ$.
    And furthermore after $m-\ell$ steps there cannot be $\tilde g_k$ generators left out, as that would mean that $\tilde g_k\in \calS \cap C(\calZ)$.
    Thus $s=m-\ell$ and $\{\tilde h_j,\tilde g_k\}=0 \iff j=k$.
\end{proof}

\begin{figure}[tbp]
    \centering
    \includegraphics[width=0.5\linewidth]{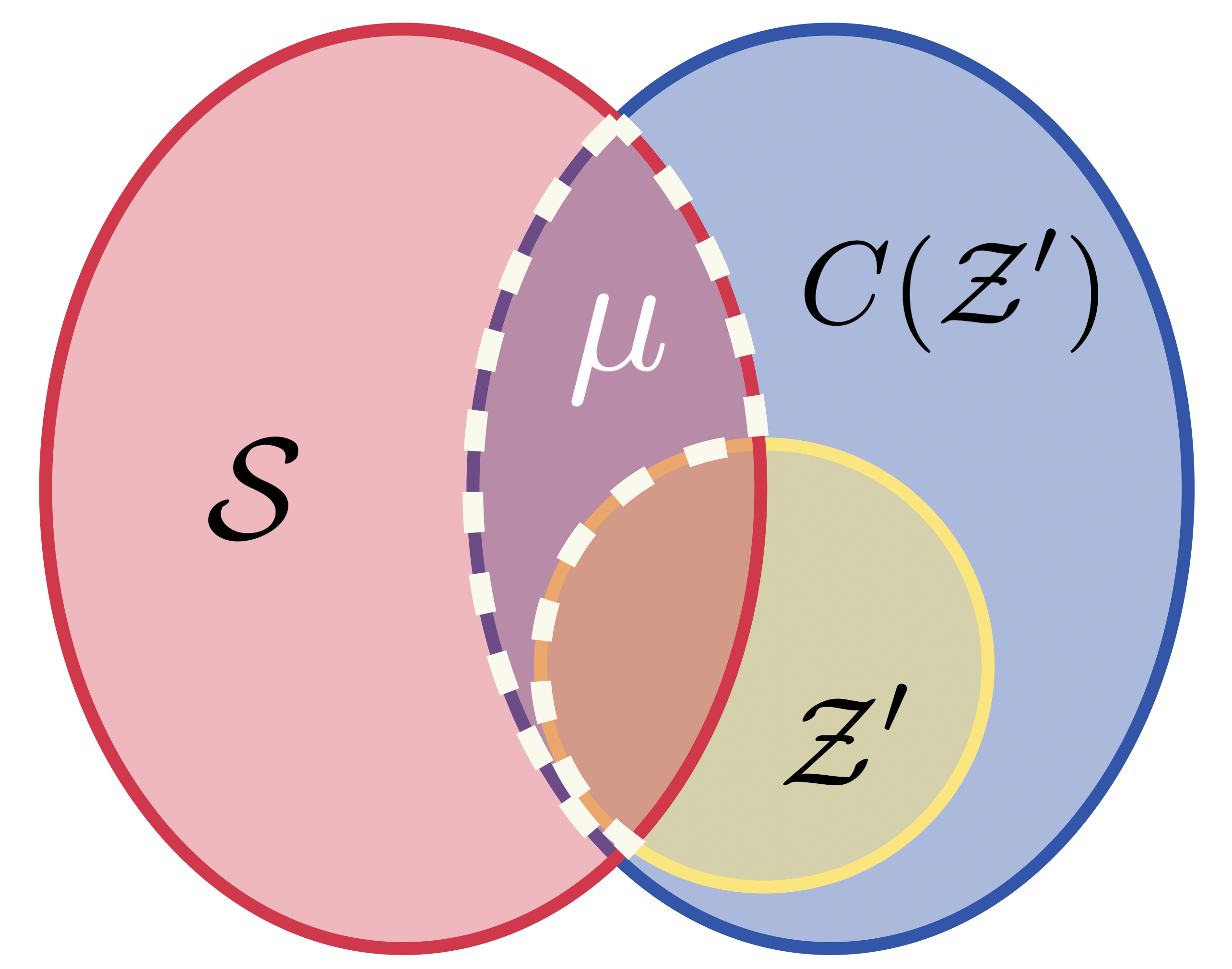}
    \caption{Visual representation of the result of~\cref{thm:structural_clifford_povms_plus}. The generators that determine the effective POVM $\bs\mu$ are elements of $(\calS\cap C(\calZ'))\setminus \calZ'$. More precisely, they are a set of nontrivial representatives for the quotient space $(\calS\cap C(\calZ')) / (\calS\cap \calZ')$.}
    \label{fig:scheme_thm2_generators_partition}
\end{figure}

\Cref{lemma:intersection_S_and_Z} makes the characterization of the effective measurement relatively straightforward. The gist is that the effective measurement is determined by $(\calS\cap C(\calZ))/(\calS\cap \calZ)$, that is by the elements of $\calS$ which commute with $\calZ$ but are not in it, and that projecting these elements onto $\calH_n$ always gives precisely $n$ independent generators, which describe the measured directions:
\begin{theorem}\label{thm:structural_clifford_povms_plus}
    Under the same conditions of~\cref{thm:structural_clifford_povms}, the effective POVM has $2^{n+m-\ell}$ nonzero elements, for some $0\le \ell\le m$.
    The nonzero elements have the form $\mu_b = 2^{\ell-m}\mathbb{P}_{\Psi_b}$, with $\{\ket{\Psi_b}\}$ a stabilizer basis on $\calH_n$. Furthermore, each of these $2^n$ distinct elements is repeated $2^{m-\ell}$ times in the POVM.
\end{theorem}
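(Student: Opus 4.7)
The plan is to apply~\cref{lemma:intersection_S_and_Z} to $(\calS,\calZ')$ in $\tildecalP_{n+m}$, with $\calZ'\equiv I_n\otimes\calZ$ the lift of the ancilla stabilizer group to the full space, and then insert the resulting generator partition into the Fourier inversion~\cref{eq:phys_vs_prime_POVMs}. The lemma yields independent generators of $\calS$ split into three classes: $\ell$ generators $h_k\in\calS\cap\calZ'$, $n$ generators $g_k$ that represent $\calS\cap C(\calZ')$ modulo $\calS\cap\calZ'$, and $m-\ell$ generators $\tilde g_k$ diagonally anticommuting with the $\tilde h_k$ that complete $\calZ'$. Each outcome label then reads $b=(a^\star,c^\star,e^\star)$ with sizes $\ell$, $n$, $m-\ell$, encoding the eigenvalues of the $h_k$, $g_k$, $\tilde g_k$ respectively.

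Next, I would evaluate $\mu_g'$ for $g=h^a g^c\tilde g^e\in\calS$ using $\mu_g'=\pi_n(g)\langle\psi|\pi_m(g)|\psi\rangle$ from the proof of~\cref{thm:structural_clifford_povms}. If $e\neq 0$ then $g\notin C(\calZ')$, so $\pi_m(g)$ anticommutes with some generator of $\calZ$ and $\mu_g'=0$; if $e=0$ then $g\in C(\calZ')$ and the maximality of $\calZ$ in $\tildecalP_m$ forces $\pi_m(g)\in\pm\calZ$, giving $\mu_g'=\pm\pi_n(g)$, a nonzero Hermitian Pauli on $\calH_n$. A key sub-step is to verify that the restriction of $g\mapsto\mu_g'$ to $\calS\cap C(\calZ')$ is a group homomorphism that factors through $\pi_n$, so that its image $\tilde\calS_{\mathrm{eff}}\le\calP_n$ is a genuine stabilizer group of $2^n$ elements. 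Abelianness holds because for $g,g'\in\calS\cap C(\calZ')$ both $\pi_m$-parts lie in $\pm\calZ$ and so commute, which together with $[g,g']=0$ yields $[\pi_n(g),\pi_n(g')]=0$; maximality follows from $|\calS\cap C(\calZ')|/|\calS\cap\calZ'|=2^n$; and $-I\notin\tilde\calS_{\mathrm{eff}}$ since $\pi_n(g)=I$ forces $g\in\calZ'$ and hence $\mu_g'=+I$. This determines a stabilizer basis $\{\ket{\Psi_{c^\star}}\}_{c^\star\in\{0,1\}^n}$ on $\calH_n$.

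Finally, I would insert everything into $\mu_b=2^{-(n+m)}\sum_\xi(-1)^{\xi\cdot b}\mu_{g^\xi}'$. The $e\neq 0$ contributions vanish; the sum over $a\in\{0,1\}^\ell$ collapses to $2^\ell\delta_{a^\star,0}$ because each $h_k$ acts as $I_n\otimes z_k$ with $z_k\in\calZ$ stabilizing $\ket\psi$, and so $\mu_{h^a g^c}'=\mu_{g^c}'$ is independent of $a$; the remaining sum $\sum_c(-1)^{c\cdot c^\star}\mu_{g^c}'=2^n\mathbb{P}_{\Psi_{c^\star}}$ is the standard stabilizer-projector expansion. Combining prefactors gives $\mu_b=2^{\ell-m}\mathbb{P}_{\Psi_{c^\star}}\delta_{a^\star,0}$, independent of $e^\star$. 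Counting: $a^\star=0$ is forced, $c^\star$ is free ($2^n$ choices), $e^\star$ is free ($2^{m-\ell}$ choices), yielding $2^{n+m-\ell}$ nonzero outcomes with $2^n$ distinct values each repeated $2^{m-\ell}$ times. The main obstacle I expect is the $\pm$-sign bookkeeping when lifting from $\tildecalP$ to $\calP$: verifying that $\tilde\calS_{\mathrm{eff}}$ is free of $-I$ and that the signs induced by $g\mapsto\mu_g'$ match the standard projector expansion for the associated stabilizer basis on $\calH_n$.
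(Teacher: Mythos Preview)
Your proposal is correct and follows essentially the same route as the paper: apply \cref{lemma:intersection_S_and_Z} to $(\calS,\calZ')$, show $\mu_g'$ vanishes unless the $\tilde g$-exponents are zero and equals $\pm\pi_n(g)$ otherwise, then Fourier-invert via~\cref{eq:phys_vs_prime_POVMs} and count. The only cosmetic difference is in the sign handling: the paper tracks the signs explicitly through vectors $d,d'$ coming from $\mathbb{P}_\psi$ and absorbs them at the end by the relabeling $a\to a+d$, $c\to c+d'$, whereas you fix convenient lifts of the $h_k$ up front and use the homomorphism property of $g\mapsto\mu_g'$ to argue directly that the image is a genuine $n$-qubit stabilizer group --- both achieve the same end, and the obstacle you flag is precisely the one the paper's relabeling step resolves.
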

\begin{proof}
    Applying~\cref{lemma:intersection_S_and_Z} to $\calS\le\tildecalP_{n+m}$ and $\calZ'\equiv I\otimes\calZ$, we obtain generators of the form
    \begin{equation}\label{eq:goodgenerators_S}
    \begin{gathered}
        \calZ' = \langle\{ I\otimes h_k\}_{k=1}^\ell\cup \{I\otimes \tilde h_k\}_{k=1}^{m-\ell}\rangle, \\
        \calS = \langle
        \{ I\otimes h_k\}_{k=1}^\ell \cup
        \{\tilde g_k\}_{k=1}^{m-\ell} 
        \cup \{ g_k\}_{k=1}^{n}
        \rangle,
    \end{gathered}
    \end{equation}
    such that $\{I\otimes h_k\}$ generate $\calS\cap\calZ'$, $\{I\otimes h_k\}\cup \{g_k\}$ generate $\calS\cap C(\calZ')$, and all generators commute except for $\{\tilde g_k,I\otimes \tilde h_k\}=0$ for all $k$.
    To make the intersections $\calS\cap\calZ'$ and $\calS\cap C(\calZ')$ well-defined, we view here $\calZ'$ as a subgroup of $\tildecalP_{n+m}$, that is, we ignore the signs in its elements.
    
    For any $g\in\calS$ and any $\ket\psi$ stabilised by $\calZ$, write
    \begin{equation}
    \begin{gathered}
        g=\prod_{k=1}^\ell (I\otimes h_k)^{\alpha_k}\prod_{k=1}^{m-\ell} \tilde g_k^{\beta_k}\prod_{k=1}^{n} g_k^{\gamma_k},\\
        \mathbb{P}_\psi =
        \prod_{k=1}^\ell \frac{I+(-1)^{d_k} h_k}2
        \prod_{k=1}^{m-\ell} \frac{I+(-1)^{e_k} \tilde h_k}2,
    \end{gathered}
    \end{equation}
    with $\alpha_k,\beta_k,\gamma_k,e_k,d_k\in\{0,1\}$.
    Then $\mu'_g=\pi_n(g)\langle\psi|\pi_m(g)|\psi\rangle$ becomes:
    \begin{equation}
    \begin{gathered}
        \mu_g' =
        A_{\psi,g}\prod_{k=1}^{m-\ell}\pi_n(\tilde g_k)^{\beta_k}
        \prod_{k=1}^{n}\pi_n(g_k)^{\gamma_k},
        \\
        A_{\psi,g}\equiv \langle\psi|
        \prod_{k=1}^{\ell}h_k^{\alpha_k}
        \prod_{k=1}^{m-\ell}\pi_m(\tilde g_k)^{\beta_k}
        \prod_{k=1}^{n}\pi_m(g_k)^{\gamma_k}
        |\psi\rangle.
    \end{gathered}
    \end{equation}
    By construction, the generators in~\cref{eq:goodgenerators_S} satisfy $[h_i,\pi_m(\tilde g_j)]=[h_i,\pi_m(g_j)]=[h_i,\tilde h_j]=0$ for all $i,j$, while $\{\tilde h_k,\pi_m(\tilde g_k)\}=0$ and $\tilde h_k$ commutes with all other generators.
    Hence $A_{\psi,g}=0$ unless $\beta_k=0$ for all $k$.
    If $b=0$, then $A_{\psi,g}=\pm1$ since the operator in the expectation value is a Pauli operator that commutes with the stabilizer group $\calZ$ fo $\ket\psi$.
    Therefore\begin{equation}\label{eq:explicit_mugprime}
        \mu_g' = \pm \delta_{\beta,0}
        \prod_{k=1}^{n} \pi_n(g_k)^{\gamma_k},
    \end{equation}
    using the shorthand $\delta_{\beta,0}\equiv\prod_{k=1}^{m-\ell}\delta_{\beta_k,0}$.
    Furthermore, the Paulis $\{\pi_n(g_k)\}_{k=1}^n$ are independent generators on $\calH_n$: if we had $\prod_k \pi_n(g_k)^{\gamma_k}=I$ with $\gamma_k$ not all zero, then the corresponding $\prod_k g_k^{\gamma_k}\in\calS\cap C(\calZ')$ would act trivially on $\calH_n$, hence be an operator commuting with $\calZ$ but not being in it, contradicting $\calZ$ being maximal abelian.
    We conclude that $\{\mu_g'\}_{g\in\calS}$ contains precisely $2^n$ distinct Pauli operators, hence $\dimspanmu=2^n$.
    A schematic visual representation of which generators survive and contribute to the effective POVM is given in~\cref{fig:scheme_thm2_generators_partition}.

    Going further, we observe that
    \begin{equation}
    \begin{gathered}
        A_{\psi,g} = \delta_{\beta,0}
        (-1)^{d\cdot\alpha}
        \langle\psi|\prod_{k=1}^n \pi_m(g_k)^{\gamma_k}|\psi\rangle \\
        = \delta_{\beta,0}(-1)^{d\cdot\alpha + d'\cdot\gamma},
    \end{gathered}
    \end{equation}
    where $d\cdot\alpha\equiv\sum_k d_k\alpha_k$, and $d'\in\{0,1\}^n$ is some binary vector that depends on the decomposition of $\pi_m(g_k)$ with respect to the generators of $\calZ$.

    We can now derive the explicit form of the effective POVM from $\mu_g'$ using~\cref{eq:phys_vs_prime_POVMs}; up to a relabeling of the indices we can write
    \begin{equation}
    \begin{gathered}
        \mu_{a,b,c}= 2^{-n-m}
        \sum_{\alpha,\beta,\gamma}
        (-1)^{a\cdot\alpha+b\cdot\beta+c\cdot\gamma}
        A_{\psi,g}
        \pi_n(g)^{\gamma}
        \\
        =2^{-n-m+\ell}
        \delta_{a+d,0}
        \sum_\gamma
        (-1)^{(c+d')\cdot\gamma}
        \pi_n(g)^\gamma.
    \end{gathered}
    \end{equation}
    We can now simply relabel the measurement outcomes so that $c+d'\to c$, $a+d\to a$, and observe that the resulting factor equals
    $2^{-n}\sum_\gamma(-1)^{c\cdot\gamma}\pi_n(g)^\gamma = \prod_{k=1}^n \frac{I+(-1)^{c_k}\pi_n(g_k)}{2}$, and conclude that
    \begin{equation}
        \mu_{a,b,c} = 2^{\ell-m}\delta_{a,0}
        \prod_{k=1}^n \frac{I+(-1)^{c_k} \pi_n(g_k)}{2}.
    \end{equation}
    In other words, we conclude that
    (i) the indices $b$ do not affect the operator, thus causing each distinct operator in the POVM to be repeated $2^{m-\ell}$ times;
    (ii) the outcomes corresponding to $a\neq0$ never occur, thus there are $2^{n+m-\ell}$ nonzero outcomes in total;
    (iii) the nonzero elements, up to the rescaling factor $2^{\ell-m}$, are precisely the elements of the stabilizer basis generated by $\{\pi_n(g_k)\}_{k=1}^n$.
    Note that in this expression there is a total of $2^{n+m}$ possible outcomes, $2^{n+m-\ell}$ of which survive the Dirac deltas, and $2^{m-\ell}$ of which are identical to each other.
    Thus summing over all $\bs b\in\{0,1\}^{n+m}$ correctly recovers the normalization condition.
\end{proof}

\hypertarget{ref:sec1}{Explicit example applications of~\cref{thm:structural_clifford_povms_plus} are given in~\cref{ex:thm2_1,ex:thm2_2}.}

\section{Entanglement and property reconstruction}
\label{sec:entanglement_and_reconstruction}

In this Section, we analyse the role of entanglement in the reconstruction capabilities of stabilizer POVMs.
Specifically, we study how the entanglement of the stabilizer basis $\{\ket{\Phi_b}\}_b$ in~\cref{eq:effective_POVM}, as quantified via the methods of Ref.~\cite{fattal2004entanglement}, relates to $\dimspanmu$.
The main result of this Section is~\cref{thm:entanglement_and_reconstruction}, which proves that using Clifford unitaries but allowing arbitrary ancilla states $\ket\psi$, the upper bound to the informational content of the measurement becomes $\dimspanmu\le 2^{n+p}$, where $p$ is the entanglement entropy of (any) $\ket{\Phi_b}$, which as shown in Ref.~\cite{fattal2004entanglement} is always an integer for stabilizer states.
The fact that these results hold for general $\ket\psi$ makes them a useful foundation for the analysis of $t$-doped POVMs, which will be discussed in~\cref{sec:tdoped_POVMs}.

\begin{theorem}\label{thm:entanglement_and_reconstruction}
    A POVM of the form of ~\cref{eq:effective_POVM}, with a Clifford $U$ and an arbitrary $\rho_R=\mathbb{P}_\psi$ has $\dimspanmu\le 2^{n+p}$, with $p$ the entanglement of the Heisenberg-evolved states $\ket{\Phi_b}$.
    In particular, IC-POVMs are possible iff the states are maximally entangled, \textit{i.e.} $m\ge n$ and $p=n$.
\end{theorem}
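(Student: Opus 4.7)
The plan is to invoke the canonical form for bipartite stabilizer states of~\cite{fattal2004entanglement} to reduce the problem to an explicitly factorized structure, where the bound $\dimspanmu\le 2^{n+p}$ becomes an elementary dimension count. The iff statement then follows immediately in the necessary direction, with sufficiency reducing to a standard genericity argument.

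First I would note that since $U$ is Clifford and the physical basis is stabilizer, $\{\ket{\Phi_k}\}_k$ is itself a stabilizer basis on $\calH_n\otimes\calH_m$, associated with some $\calS\le\tildecalP_{n+m}$. Applying the bipartite canonical form of~\cite{fattal2004entanglement} to $\calS$ across the data/ancilla cut yields local Clifford operators $V_A,V_B$ and tensor decompositions $\calH_n=\calH_{A_1}\otimes\calH_{A_2}$, $\calH_m=\calH_{B_1}\otimes\calH_{B_2}$ with $\dim\calH_{A_2}=\dim\calH_{B_1}=2^p$, such that each basis element factorizes as
\begin{equation*}
(V_A\otimes V_B)\ket{\Phi_k} = \ket{\alpha_{k_A}}_{A_1}\otimes\ket{B_{k_c}}_{A_2 B_1}\otimes\ket{\beta_{k_B}}_{B_2},
\end{equation*}
with $\{\ket{\alpha_{k_A}}\}$ and $\{\ket{\beta_{k_B}}\}$ stabilizer bases on the local factors and $\{\ket{B_{k_c}}\}$ a Bell basis on the cross factor. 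Since absorbing $V_A\otimes V_B$ into $U$ and $\ket\psi$ leaves $\dimspanmu$ invariant, I may assume the canonical form from the outset.

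Contracting with the ancilla then gives the factorized POVM element
\begin{equation*}
\mu_k = \ketbra{\alpha_{k_A}}_{A_1}\otimes \ketbra{\chi_{k_c,k_B}}_{A_2},
\end{equation*}
where $\ket{\chi_{k_c,k_B}}\equiv (I_{A_2}\otimes\bra\psi_{B_1 B_2})(\ket{B_{k_c}}_{A_2 B_1}\otimes\ket{\beta_{k_B}}_{B_2})\in\calH_{A_2}$. Hence $\on{span}(\{\mu_k\})\subseteq\on{span}(\{\ketbra{\alpha_{k_A}}\})\otimes\on{span}(\{\ketbra{\chi_{k_c,k_B}}\})$, whose factor dimensions are $2^{n-p}$ (since $\{\ket{\alpha_{k_A}}\}$ is an orthonormal basis of $\calH_{A_1}$) and at most $4^p$ (Hermitian operators on a $2^p$-dimensional space), yielding $\dimspanmu\le 2^{n-p}\cdot 4^p=2^{n+p}$. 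For the iff, necessity is immediate: combining $\dimspanmu=4^n$ with the bound forces $p\ge n$, and since $p\le\min(n,m)$ this requires $p=n$ and $m\ge n$. For sufficiency at $p=n$, the factor $\calH_{A_1}$ is trivial, and a direct computation in the Bell basis yields $\mu_{k_c,k_B}\propto\sigma_{k_c}\ketbra{\psi_{k_B}^*}\sigma_{k_c}$ for auxiliary states $\ket{\psi_{k_B}}\equiv(I_{B_1}\otimes\bra{\beta_{k_B}})\ket\psi$; since a generic pure state has nonzero overlap with every Pauli string, the corresponding Pauli orbit spans $\Herm(\calH_n)$ for suitable $\ket\psi$, producing an IC POVM.

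The main obstacle I anticipate is establishing the canonical form at the level of a stabilizer \emph{basis}, uniformly across all $2^{n+m}$ basis elements, rather than only for a single stabilizer state as originally phrased in~\cite{fattal2004entanglement}. I expect this to follow by applying the canonical form to $\calS$ itself and verifying that the resulting generator decomposition -- with pure-$A$, pure-$B$, and anticommuting cross-pair generators of the form $(X_k^A X_k^B, Z_k^A Z_k^B)$ -- produces the stated factorized eigenbasis uniformly in $k$. Once this is secured, the bound is a routine dimension count, and sufficiency follows from the standard fact that generic pure states exhaust the Pauli orbit structure on $\Herm(\calH_n)$.
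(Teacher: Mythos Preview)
Your proposal is correct and rests on the same structural input as the paper --- the bipartite canonical form of~\cite{fattal2004entanglement} --- but the execution differs. The paper never passes to states: it stays in the Pauli picture via the Fourier-transformed operators $\mu_g'=\pi_n(g)\langle\psi|\pi_m(g)|\psi\rangle$ introduced in the proof of~\cref{thm:structural_clifford_povms}, and obtains $\dimspanmu\le 2^{n+p}$ by counting cosets in $\calS/\calS_m$. Your argument instead applies local Cliffords to bring the basis itself into factorized form and reads off the bound as a product of operator-span dimensions. Both are valid; your route is more geometric and self-contained, while the paper's coset viewpoint feeds directly into~\cref{cor:rank_multiples} and the $t$-doped analysis of~\cref{sec:tdoped_POVMs}, which repeatedly use the quotients $\calS/\calS_m$ and $\calS/\calS_n\calS_m$. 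Your anticipated obstacle --- lifting the canonical form from a single state to the entire basis --- is indeed handled, as you suggest, by applying the decomposition to the group $\calS$ rather than to a state. On sufficiency, your phrase ``nonzero overlap with every Pauli string'' should read ``nonzero expectation value''; with that fix your orbit argument is equivalent to the paper's one-line remark that a random $\ket\psi$ almost surely has all Pauli expectation values nonzero.
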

\begin{proof}
We consider POVMs of the form~\cref{eq:effective_POVM} with $\{\ket{\Phi_b}\}_b$ a stabilizer basis and $\ket\psi$.
Let $\calS=\langle g_1,..., g_{n+m}\rangle\le \tildecalP_{n+m}$ be the stabilizer group of $\{\mathbb{P}_{\Phi_b}\}_b$.
As discussed in~\cref{sec:clifford_povms}, the Pauli operators reconstructed by the effective POVM are the $\pi_n(g)$ such that $g\equiv \pi_n(g)\otimes\pi_m(g)\in\calS$ and $\langle\psi|\pi_m(g)|\psi\rangle\neq0$.

The entanglement structure of stabilizer states can be characterised, as shown in Ref.~\cite{fattal2004entanglement}, by finding generators for $\calS$ such that
\begin{equation}\label{eq:stab_ent_structure}
\begin{gathered}
    \calS=\bigg\langle
    \{a_i\otimes I\}_{i=1}^{\dim(\calS_n)}
    \cup
    \{I\otimes b_i\}_{i=1}^{\dim(\calS_m)}
    \\
    \cup
    \,\{g_i^{(n)}\otimes g_i^{(m)}\}_{i=1}^p
    \cup \{\bar g_i^{(n)}\otimes \bar g_i^{(m)}\}_{i=1}^p
    \bigg\rangle,
\end{gathered}
\end{equation}
with (i) $\calS_n,\calS_m\le \calS$ the subgroups of operators with support only on $\calH_n$ and $\calH_m$, respectively, (ii) each $g_i^{(n)}$, $g_i^{(m)}$ anticommuting with $\bar g_i^{(n)}$, $\bar g_i^{(m)}$, respectively, and commuting with all other generators, and (iii) $g_i^{(n)},\bar g_i^{(n)}\neq a_j$, $g_i^{(m)},\bar g_i^{(m)}\neq b_j$, for all $i,j$.
The parameter $p$ quantifies the entanglement of the states and satisfies $2p=n+m-\dim(\calS_n)-\dim(\calS_m)$ and $0\le p\le \min(n,m)$.
States are separable for $p=0$ and maximally entangled for $p=\min(n,m)$.
From this decomposition one also finds the relations
\begin{equation}\label{eq:entanglement_relations}
\begin{gathered}
    n=\dim(\calS_n)+p,
    \qquad
    m=\dim(\calS_m)+p,
    \\
    \dim(\{\pi_n(g):\, g\in\calS\})=2p+\dim(\calS_n) = n+p.
\end{gathered}
\end{equation}
Thus the effective POVM contains precisely $2^{n+p}$ linearly independent Pauli operators, provided $\ket\psi$ is such that $\langle\psi|\pi_m(g)|\psi\rangle\neq0$ for sufficiently many $g\in\calS$.
A direct way to conclude that $\dimspanmu\le 2^{n+p}$ is to consider the quotient space $\calS/\calS_m$. By definition of quotient space there is a bijection between the set of projections $\{\pi_n(g):\, g\in\calS\}$ and $\calS/\calS_m$, and $\lvert\calS/\calS_m\rvert=\lvert\calS\rvert/\lvert\calS_m\rvert=2^{n+m}/2^{m-p}=2^{n+p}$.
\end{proof}

Note that there is always some $\ket\psi$ that maximizes $\dimspanmu$, as a random pure state almost surely has nonzero expectation value on all Pauli operators, and thus $\dimspanmu=2^{n+p}$.
In conclusion, an IC measurement is possible in this setting iff $p=n$, which requires $m\ge n$ and maximal entanglement between $\calH_n$ and $\calH_m$.

\begin{corollary}\label{cor:rank_multiples}
    Under the same assumptions as~\cref{thm:entanglement_and_reconstruction}, $\dimspanmu= k\, 2^{n-p} $ for some $k\in\mathbb{N}$.
\end{corollary}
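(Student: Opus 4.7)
The plan is to leverage the explicit characterization from the proof of the theorem: $\dimspanmu$ equals the number of distinct projections $\pi_n(g)$, for $g\in\calS$, that ``survive'' in the sense that some $g'$ in the coset $g\calS_m$ satisfies $\langle\psi|\pi_m(g')|\psi\rangle\neq 0$. Equivalently, $\dimspanmu$ counts the surviving cosets in $\calS/\calS_m$. The entanglement decomposition of~\cref{eq:stab_ent_structure} gives $\dim(\calS_n)=n-p$, so $\lvert\calS_n\rvert=2^{n-p}$. The strategy is to show that the set of surviving cosets is a union of $\calS_n$-orbits, whence its cardinality is automatically a multiple of $2^{n-p}$.

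First I would define a right action of $\calS_n$ on the quotient $\calS/\calS_m$ via $(g\calS_m)\cdot s = (gs)\calS_m$ for $s\in\calS_n$. This is well-defined because $\calS$ is abelian. I would then show this action is free: if $gs\calS_m=g\calS_m$ for some $s\in\calS_n$, then $s\in\calS_m$, and since $\calS_n$ and $\calS_m$ are supported on disjoint tensor factors, $\calS_n\cap\calS_m=\{I\}$, forcing $s=I$. Consequently every $\calS_n$-orbit in $\calS/\calS_m$ has size exactly $\lvert\calS_n\rvert=2^{n-p}$.

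The second key observation is that the survival condition only depends on the $\calS_n$-orbit. Indeed, for $s\in\calS_n$ one has $\pi_m(gs)=\pi_m(g)$, since $s$ acts trivially on $\calH_m$. Thus the scalar $\langle\psi|\pi_m(\cdot)|\psi\rangle$ is constant along an $\calS_n$-orbit, and so either every coset in the orbit survives or none does. The set of surviving cosets is therefore a disjoint union of $\calS_n$-orbits, giving $\dimspanmu = k\cdot 2^{n-p}$ for some nonnegative integer $k$ (and $k\ge 1$ by~\cref{lemma1}, which guarantees $\dimspanmu\ge 2^n$).

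The main thing to get right is the bookkeeping between the Pauli representatives $\mu_g'$ (parameterized by elements of $\calS$) and the cosets in $\calS/\calS_m$ (which index the distinct Pauli operators $\pi_n(g)$). Once the proof of~\cref{thm:entanglement_and_reconstruction} is invoked to reduce $\dimspanmu$ to the count of surviving cosets, and once the disjointness $\calS_n\cap\calS_m=\{I\}$ is extracted from the support structure of the generators in~\cref{eq:stab_ent_structure}, the rest is a short orbit-counting argument with no further calculation needed.
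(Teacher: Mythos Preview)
Your proposal is correct and follows essentially the same approach as the paper: both hinge on the observations that $\calS_n\cap\calS_m=\{I\}$ and that the survival condition on a coset $g\calS_m$ depends only on $\pi_m(g)$, hence is $\calS_n$-invariant. The paper packages this via the double quotient $\calS/\calS_n\calS_m$, whereas you phrase it as a free $\calS_n$-action on $\calS/\calS_m$ with orbit size $2^{n-p}$; these are the same structure, and your orbit-counting formulation is arguably a bit cleaner.
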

\begin{proof}
    In~\cref{thm:entanglement_and_reconstruction} we used the quotient space $\calS/\calS_m$ to calculate the number of operators in $\calH_n$ reconstructed by the measurement.
    By the same logic, considering instead the quotient space $\calS/\calS_n$, we see that for each $\pi_m(g)\in\tildecalP_m$ there are $\lvert\calS_n\rvert=2^{n-p}$ distinct $\pi_n(g)$ such that $\pi_n(g)\otimes\pi_m(g)\in\calS$.

    Going further, we observe that the subgroups $\calS_n,\calS_m\le\calS$ commute pairwise, $[\calS_n,\calS_m]=0$, and thus their product $\calS_n\calS_m\le\calS$ is also a subgroup, and has order $2^{n+m-2p}$.
    The associated quotient space $\calS/\calS_n\calS_m$ has $2^{n+m}/2^{n+m-2p}=2^{2p}$ elements.
    This quotient space provides a partition of $\calS$ into pairs of cosets of the form $g_i^{(n)}\calS_n\times g_i^{(m)}\calS_m$ and $\bar g_i^{(n)}\calS_n\times \bar g_i^{(m)}\calS_m$, linking all sets of $2^{n-p}$ projections $\pi_n(g)\in\tildecalP_n$ that are paired to the same set of $2^{m-p}$ projections $\pi_m(g)\in\tildecalP_m$.
    Indeed, there is a one-to-one mapping between the cosets in $\calS/\calS_n\calS_m$ and the set of $2p$ nonlocal generators of $\calS$ in~\cref{eq:stab_ent_structure}.

    Thus, if $\ket\psi$ does not annihilate \textit{any} of the $\calH_m$ elements in such a coset, then \textit{all} the corresponding $2^{n-p}$ projections $\pi_n(g)$ are reconstructible by the POVM.
\end{proof}

\hypertarget{ref:sec1}{Explicit example applications of~\cref{thm:entanglement_and_reconstruction,cor:rank_multiples} are given in~\cref{ex:ent1,ex:ent2,ex:ent3}.}

\parTitle{Knowing $\calS_m$ and $p$ is sufficient}
One aspect emerging from the discussion in~\cref{cor:rank_multiples} and~\cref{ex:ent1,ex:ent2,ex:ent3} is that $\calS_m$ and $p$ alone are sufficient to compute $\dimspanmu$.
Indeed, we have in general $C(\pi_m(\calS_m))=\pi_m(\calS)$, and any $\calS$ compatible with a given $\calS_m$ can be obtained taking the quotient space $C(\pi_m(\calS_m))/\pi_m(\calS_m)$ and attaching an $\calH_n$ operator to each resulting coset ensuring the proper commutation properties are respected.
The way this last step is carried out does not affect $\dimspanmu$, but only \textit{which} directions the measurement reconstructs.
Knowledge of $p$ is then necessary to know $\dimspanmu$, as each coset in $\calS/\calS_m$ that is not annihilated by $\ket\psi$ contributes $2^{n-p}$ directions to the overall value of $\dimspanmu$.
In particular, when $m\ge n$ and the entanglement is maximal, we have $p=n$ and $2^{n-p}=1$, thus instead of looking at the ``double-coset'' quotient spaces $\calS/\calS_n\calS_m$ it is sufficient to look at the quotient spaces $\calS/\calS_m$.

\parTitle{The role of initial ancillary states}
Although $p$ controls the maximal value of $\dimspanmu$, the choice of $\ket\psi$ is also important.
As shown in~\cref{sec:clifford_povms}, if $\ket\psi$ is a stabilizer state, then we always have $\dimspanmu=2^n$, independent of $p$.
Stabilizer projections are worst-case scenarios in this sense.
In contrast, a random projection almost surely removes no full coset, thus ensuring $\dimspanmu=2^{n+p}$.
The intermediate scenarios, for example when projecting on structured non-stabilizer states such as $\ket T\equiv T\ket+$, remains a nontrivial open question, and will be analysed below.

\hypertarget{ref:ent_part2}{\Cref{ex:ent4,ex:ent5}} show explicitly how the coset $\calS_m$ determines $\dimspanmu$ for all choices of $\ket\psi$ via the quotient $C(\pi_m(\calS_m))/\pi_m(\calS_m)$.

\parTitle{Maximal \texorpdfstring{$\dimspanmu$}{dim(mu)} and entanglement}
When considering the $\dimspanmu$ resulting from specific choices of $\ket\psi$, the double coset decomposition in~\cref{cor:rank_multiples} might lead one to wonder whether higher $\dimspanmu$ might occur at lower entanglement values, which seems counterintuitive.
Alas, this \textit{is} possible, and intuitively can be traced back to the fact that the information in entangled states is intrinsically nonlocal, and can be destroyed by local operations.
As a simple example, consider two-qubits with $n=m=1$, and $\calS=\langle\ps{ZZ},\ps{XX}\rangle$, \textit{i.e.} a Bell measurement.
If the second qubit's ancilla is maximally mixed, $\rho_R=I/2$, the effective measurement becomes trivial:
\begin{equation}
    \mu_b = \Trace_R[\mathbb{P}_{\Phi_b}(I\otimes I/2)] = \frac I2,
    \,\, \forall b.
\end{equation}
for maximally entangled $\ket{\Phi_b}$.
Thus, in this case, a non-entangled measurement such as a computational basis one achieves $\dimspanmu=2^n$, which is higher than the $\dimspanmu=1$ obtained with a fully entangled measurement.
This phenomenon does not seem to occur when $\ket\psi$ is pure, due to the purity constraint restricting the possible patterns of nonzero Pauli expectation values.
In lack of a formal proof, we leave this statement as a conjecture:
\begin{conjecture}\label{conj:maxentmaxrank}
    For any pure $\ket\psi$, the maximal value of $\dimspanmu$ can be obtained with maximally entangled $\calS$.
\end{conjecture}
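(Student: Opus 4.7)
\parTitle{Proof plan}
The plan is to reduce the conjecture to a combinatorial inequality on the Pauli support of $\ket\psi$, then attempt to prove it via purity. For concreteness, set $m=n$; the case $m>n$ should follow by an analogous analysis of the coset structure. Expanding $\mathbb{P}_{\Phi_b}=\prod_k (I+(-1)^{b_k}g_k)/2$ in~\cref{eq:effective_POVM} yields, as in the proofs of~\cref{thm:structural_clifford_povms,thm:entanglement_and_reconstruction}, $\mu_b\propto\sum_{g\in\calS}(-1)^{\xi_g\cdot b}\langle\psi|\pi_m(g)|\psi\rangle\,\pi_n(g)$, so $\dimspanmu=|R(\calS)|$ with $R(\calS)\equiv\{A\in\tildecalP_n:\exists g\in\calS,\ \pi_n(g)=A,\ \pi_m(g)\in V_\psi\}$ and $V_\psi\equiv\{B\in\tildecalP_m:\langle\psi|B|\psi\rangle\neq 0\}$. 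For any maximally entangled $\calS^*$, the projection $\pi_n|_{\calS^*}$ is a bijection onto $\tildecalP_n$, inducing a commutation-preserving isomorphism $\phi:\tildecalP_n\to\tildecalP_m$ with $|R(\calS^*)|=|\phi^{-1}(V_\psi)|=|V_\psi|$, a value independent of the specific $\calS^*$. The conjecture is therefore equivalent to the inequality $|R(\calS)|\le|V_\psi|$ for every stabilizer $\calS$.

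Applying~\cref{cor:rank_multiples}, write $|R(\calS)|=2^{n-p}|K|$, where $|K|$ is the number of cosets of $\pi_m(\calS_m)$ in $\pi_m(\calS)$ that intersect $V_\psi$. The target becomes $|K|\cdot 2^{n-p}\le|V_\psi|$, i.e., the union of contributing cosets -- a subset of $\pi_m(\calS)$ of size $|K|\cdot|\pi_m(\calS_m)|$ -- has cardinality at most $|V_\psi|$. The elementary bound $|K|\le|V_\psi|$ (picking a distinct witness in each coset) falls short by the factor $|\pi_m(\calS_m)|=2^{n-p}$, so purity must enter. I would exploit the identity $\sum_{B\in\tildecalP_m}\langle\psi|B|\psi\rangle^2=2^m$ together with the fact, established in the proof of~\cref{thm:entanglement_and_reconstruction}, that $\pi_m(\calS_m)$ is abelian in $\calP_m$ and coincides with the centre of $\pi_m(\calS)$. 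Decomposing the purity sum over cosets of $\pi_m(\calS_m)$ in $\tildecalP_m$ and applying character orthogonality on the abelian group $\pi_m(\calS_m)$, one can express each coset-restricted purity in terms of squared syndrome-class weights of $\ket\psi$; I would then argue that contributing cosets carry enough Pauli mass to force the required factor of $2^{n-p}$, yielding the desired bound.

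The main obstacle I anticipate is the inhomogeneity across cosets: a contributing coset may contain a single $B\in V_\psi$ with a small $|\langle\psi|B|\psi\rangle|$, making an averaging argument loose. Should the direct purity route prove insufficient, I would pivot to a constructive approach: for each $A\in R(\calS)$ fix a lift $g_A\in\calS$ with $\pi_m(g_A)\in V_\psi$, and try to extend the assignment $A\mapsto\pi_m(g_A)$ (well-defined only modulo $\pi_m(\calS_m)$) to a symplectic isomorphism $\phi:\tildecalP_n\to\tildecalP_m$. The associated maximally entangled $\calS^*_\phi\equiv\langle\{A\otimes\phi(A):A\in\tildecalP_n\}\rangle$ would then satisfy $R(\calS^*_\phi)\supseteq R(\calS)$ by construction, implying the conjecture. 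The hard step -- and the one where I expect the core difficulty to concentrate -- is showing that one can always make the coset-representative choices compatible with a global symplectic extension: a non-trivial matching condition between the non-stabilizer structure of $V_\psi$ and the symplectic geometry of $\tildecalP_{n+m}$, which will likely demand a genuinely new algebraic input tailored to pure states.
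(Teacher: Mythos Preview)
First, a framing note: this statement is left as an open conjecture in the paper. The paper's ``Discussion'' reformulates it as an induction on $p$ --- given $\calS$ with entanglement $p$, find $\calS'$ with entanglement $p+1$ and at least the same $\dimspanmu$ --- passes to a canonical Bell-pair form via local Cliffords, and then proves only the case $n=m=2$ by an ad-hoc case analysis on two-qubit pure states (\cref{ex:ent6}). So your proposal should be read against that, not against a finished proof.

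Your reduction (for $m=n$) is correct and is genuinely sharper than what the paper writes down. The observation that \emph{every} maximally entangled $\calS^*$ gives $\dimspanmu=|V_\psi|$, so that the conjecture is equivalent to the uniform bound $|R(\calS)|\le |V_\psi|$, is a clean target the paper does not isolate explicitly. The identity $|R(\calS)|=2^{n-p}|K|$ follows from \cref{cor:rank_multiples} exactly as you say, and your identification of $\pi_m(\calS_m)$ with the centre of $\pi_m(\calS)$ is correct. One caution: for $m>n$ the value of $\dimspanmu$ at maximal entanglement is no longer independent of $\calS^*$ (because $\calS_m$ is nontrivial), so the ``analogous analysis'' you defer is not a straightforward copy of the $m=n$ argument.

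That said, the proposal remains a plan, and both routes have the gaps you already flag. The purity route looks unlikely to close: $\sum_B\langle\psi|B|\psi\rangle^2=2^m$ is an $\ell_2$ constraint, while $|V_\psi|$ is an $\ell_0$ quantity, and character orthogonality over $\pi_m(\calS_m)$ only redistributes mass among syndrome classes without producing a support bound --- your own ``inhomogeneity'' objection is fatal here, not merely inconvenient. The constructive route is closer in spirit to what the paper's $n=m=2$ case actually does, but note the structural obstacle you understate: the partial assignment $A\mapsto\pi_m(g_A)$ is $2^{n-p}$-to-one before you even start (all $A$ in a fixed $\pi_n(\calS_n)$-coset share the same target coset), so a symplectic extension with $R(\calS^*_\phi)\supseteq R(\calS)$ requires producing, for each contributing coset, $2^{n-p}-1$ \emph{additional} elements of $V_\psi$ outside the original witnesses. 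That is precisely the step where the paper, too, gets stuck and resorts to case analysis; your framing does not yet supply the ``genuinely new algebraic input'' you correctly anticipate needing.
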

\textit{Discussion. }
    This is clearly true when $\ket\psi$ is stabilizer, or --- with probability $1$ --- for random $\ket\psi$.
    More generally, we showed that  $\dimspanmu= 2^{n+p}$ is achievable with suitable choices of $\ket\psi$, and that for any $\ket\psi$ we have $\dimspanmu=k\, 2^{n-p}$ for some integer $2^p\le k\le 2^{2p}$.
    For a given $\ket\psi$ to give $\dimspanmu=k\, 2^{n-p}$, it must satisfy $\langle\psi|P|\psi\rangle\neq0$ for at least $k$ Pauli operators distributed in $k$ distinct cosets of $\calS/\calS_n\calS_m$.
    Thus, although for smaller $p$ the maximal $\dimspanmu$ is smaller, $\ket\psi$ needs to satisfy fewer conditions to achieve it, and vice versa.
    
    Proving this statement thus requires to show that given an $\calS$ with entanglement $p$, and $\ket\psi$ corresponding to $\dimspanmu= k\, 2^{n-p}$, there is some $\calS'$ with entanglement $p+1$ such that \textit{the same} $\ket\psi$ achieves $\dimspanmu=k'\, 2^{n-p-1}$ with $k'\ge 2k$.

    In fact, we can state this even more explicitly noting that all these statements about $\dimspanmu$ are invariant under a unitary change of basis local to $\calH_n$ and $\calH_m$.
    Furthermore, there is always a basis change via a Clifford unitary local to $\calH_n$ and $\calH_m$ such that in the new basis $\calS$ has local generators $a_1=\ps{Z}_1,\dots, a_{n-p}=\ps{Z}_{n-p}$, $b_1=\ps{Z}_{1}, \dots, b_{m-p}=\ps{Z}_{m-p}$, and nonlocal generators $g_i^{(n)}\otimes g_i^{(m)}=\ps{X}_{n-p+i}\ps{X}_{m-p+i}$, $\bar g_i^{(n)}\otimes \bar g_i^{(m)}=\ps{Z}_{n-p+i}\ps{Z}_{m-p+i}$, $i=1,\dots, p$.
    Intuitively, this measurement basis involves Bell measurements on $p$ qubit pairs, and local $Z$ measurements on the remaining ones.
    With this choice of $\calS$, the $\mu_b'$ measurement operators take the form (modulo some suitable relabeling of outcomes):
    \begin{equation}\label{eq:mubprime_canonical_S}
        \prod_{i=1}^{n-p} Z_i^{a_i}
        \prod_{j=1}^p P_{n-p+j}^{(j)}
        \langle\psi|
        \left(\prod_{k=1}^{m-p} Z_k^{b_k}\prod_{\ell=1}^p P_{m-p+\ell}^{(\ell)}\right)
        |\psi\rangle,
    \end{equation}
    where $a_i,b_k\in\{0,1\}$, the lower indices indicate on which qubit the operator is acting, and each $P^{(j)}\in\{I,X,Y,Z\}$ for each $j$.
    Each choice of $(a_i)_{i=1}^{n-p}$, $(b_k)_{k=1}^{m-p}$, and $(P^{(j)})_{j=1}^p$, corresponds to a different operator $\mu_b'$.
    Note in particular that for all $j=1,...,p$, the operator $P^{(j)}$ outside the bracket and the one inside the bracket are equal (though applied to different qubits).
    The value of $\dimspanmu$ equals the number of nonzero terms in~\cref{eq:mubprime_canonical_S}.
    In turn, this is equal to the number $r$ of operators $\prod_{k=1}^{m-p} Z_k^{b_k}\prod_{\ell=1}^p P_{m-p+\ell}^{(\ell)}$ with nonzero expectation value on $\ket\psi$, multiplied by $2^{n-p}$.
    One way to prove the conjecture is thus to show that if $\ket\psi$ gives some value of $\dimspanmu$, then increasing $p$ by $1$, we get $\dimspanmu'\ge 2\dimspanmu$.
\hypertarget{ref:ent_part3}{In~\cref{ex:ent6} we prove the statement when $n=m=2$. The general case is left as a conjecture.}

\section{Reconstruction with \texorpdfstring{$t$}{t}-doped POVMs}
\label{sec:tdoped_POVMs}

We now broaden the scope to effective measurements obtained with \textit{non‑stabilizer} circuits.
Here, we extend the results of~\cref{sec:clifford_povms} to POVMs~\eqref{eq:effective_POVM} where $U$ is $t$-doped, $\rho_R=\mathbb{P}_\psi$ is a stabilizer state, and the physical measurement is stabilizer.
We refer to such measurements, schematically represented in~\cref{fig:parallel_doping}, as \textit{$t$-doped POVMs}.
The number of $T$ gates injected in a Clifford circuit is a standard way to quantify non-stabilizerness~\cite{bravyi2005universal,leone2021renyi,jiang2023Lower,leone2024Stabilizer,ahmadi2024Mutual,haug2025Probing}.
We will argue that an $n$-qubit IC-POVM requires a circuit with at least $2n$ $T$ gates.
More specifically, we prove that $t\ge \frac{2n}{\log_23}\approx 1.26 n$ is necessary, and provide evidence that the actual threshold is $t\ge 2n$. We also give explicit constructions proving that $t=2n$ is sufficient for all $n$.
Interestingly, these bounds match the quasi-chaotic and chaotic bounds for $t$-doped circuits in the sense of~\cite{leone2024learning}.

\begin{figure}[tb]
    \centering
    \includegraphics[width=0.5\linewidth]{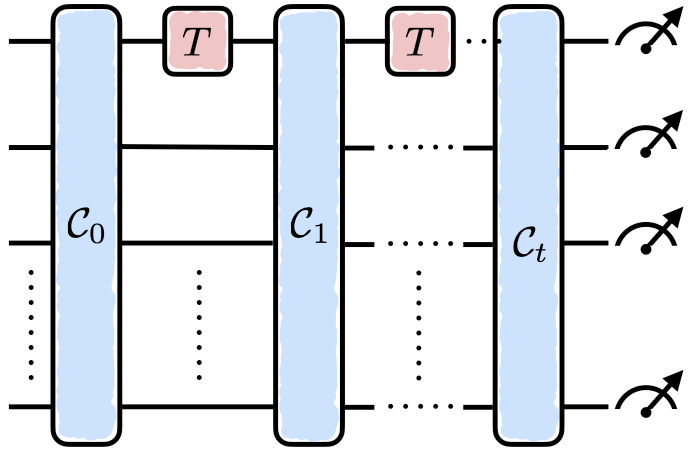}
    \hspace{4mm}
    \includegraphics[width=0.35\linewidth]{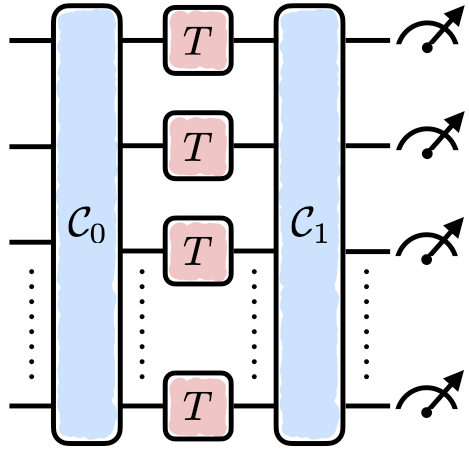}
    \caption{Serial (left) and parallel (right) doping of an $n$-qubit Clifford circuit.
    Each layer $\mathcal{C}_0$ is a random Clifford gate.
    The final measurement is performed in the computational basis.
    }
    \label{fig:parallel_doping}
\end{figure}

\begin{figure}[tb]
    \centering
    \includegraphics[width=0.8\linewidth]{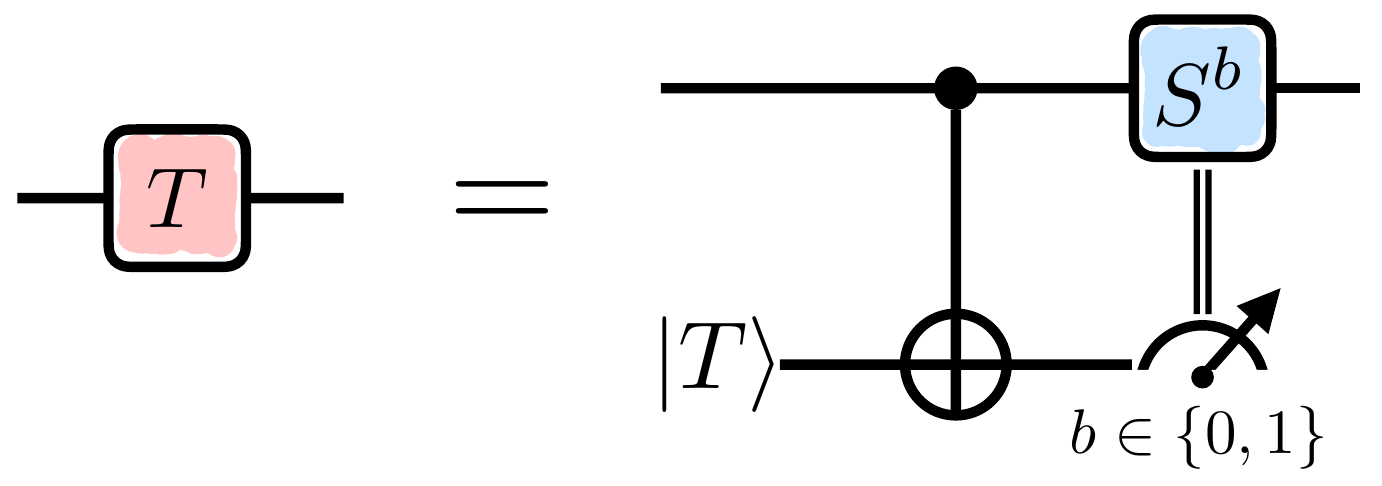}
    \caption{Gadget to implement a $T$ gate adding an ancilla initially in $\ket T\equiv T\ket +$.
    Measuring the second qubit after a CNOT, teleports a $T$ gate on the first qubit when outcome is $\ket 0$, and a $ST$ gate otherwise.
    }
    \label{fig:gadget}
\end{figure}

\parTitle{$T$ gate gadgets}
Circuits containing $T$ gates leave the Clifford group, so the stabilizer formalism no longer applies directly. 
A useful formal workaround is to replace each $T$ gate with a \textit{T-gadget}~\cite{zhou2000methodology,bravyi2016improved}. Each gadget, shown schematically in~\cref{fig:gadget}, introduces an ancilla qubit prepared in the magic state $\ket T\equiv T\ket+=\frac{\ket0+e^{i\pi/4}\ket1}{\sqrt2}$, applies a CNOT with the ancilla as target, and finally projects the ancilla onto $\ket0$.
In general, gadgets involve a computational basis measurement on the ancilla rather than a projection, as shown in~\cref{fig:gadget}. However, since we use them purely as a mathematical device, which never needs to be experimentally implemented for our purposes, we may restrict to the $\ket 0$ measurement outcome and thus equivalently treat it as a projection.
This simplification also removes the need for the corrective $S$ gate and makes the circuit more compact.
This sub-circuit reproduces exactly the action of the original $T$ gate.
After substituting all $T$ gates with their gadgets, a circuit with $t$ such gates becomes the Clifford circuit $\tilde U=C_0 \prod_{k=1}^t \on{CNOT}_k\, C_k$, where each $C_k$ is a Clifford subcircuits between consecutive $T$ gates, and $\on{CNOT}_k$ are the gadgets' CNOTs.
The final physical measurement then consists of a stabilizer measurement on $n+m$ qubits together with a projection of the $t$ gadget ancillas onto $\ket0$, as shown in~\cref{fig:explicit_gadget_example}.

The non‑stabilizerness is thus isolated in the preparation of the magic states, and in the Heisenberg picture we can now describe the evolution of the POVM remaining within the stabilizer formalism.
The effective POVM on the initial $n+m+t$ qubits, \textit{before} projecting onto $\ket\psi$, is described by a $(n+m+t)$-qubit stabilizer group with $t$ fixed syndromes, and has exactly $2^{n+m}$ outcomes.
Explicitly, this means that the Heisenberg-evolved measurement states $\ket{\Phi_b}$ form the common eigenbasis of the $n+m+t$ generators of $\calS$, with $n+m$ eigenvalues free to take any value in $\{0,1\}^{n+m}$, but with the eigenvalues corresponding to the last $t$ generators fixed to $+1$.
An explicit example of a $t$-doped circuit in this representation is shown in~\cref{fig:explicit_gadget_example}, and a detailed derivation of the corresponding effective measurement and stabilizer structure is provided in~\cref{ex:tdoped_2}.

\begin{figure}[tbp]
    \centering
    \includegraphics[width=\linewidth]{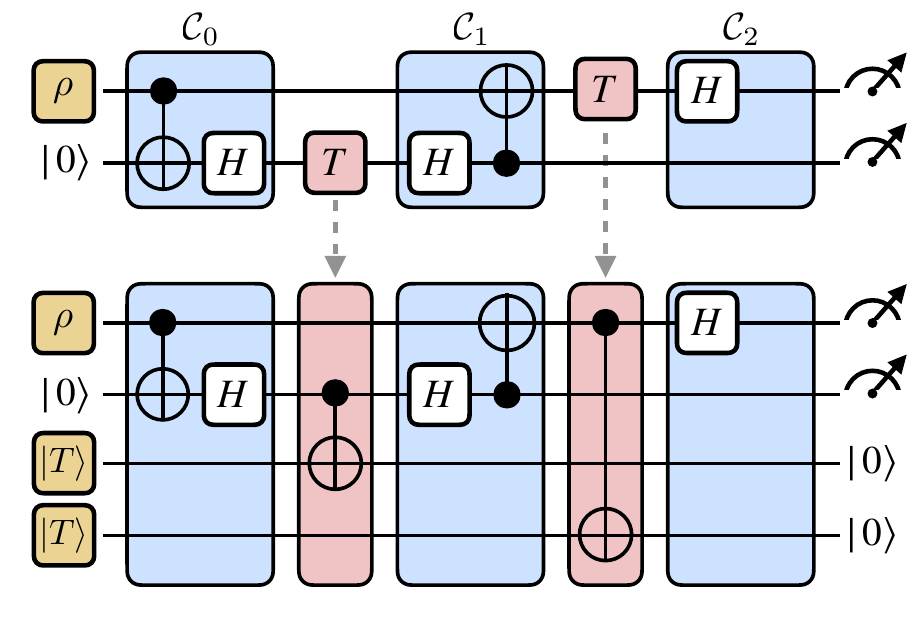}
    \caption{
    Example of $t$-doped circuit with $n=m=1$ and $t=2$.
    Using the gadgets trick, the $T$ gates become CNOTs with target a new ``gadget ancilla'', whose initial value is $\ket T$ and that is projected onto $\ket0$ at the end of the circuit.
    The explicit characterisation of this circuit in the stabilizer formalism is given in~\cref{ex:tdoped_2}.
    }
    \label{fig:explicit_gadget_example}
\end{figure}

\parTitle{$T$ gates and ancillas}
A pressing question is: given $n$ data qubits, $m$ stabilizer ancillas, and a unitary with $t$ $T$ gates, what values of $\dimspanmu$ are possible?
We already know the answer in some simple cases:
\begin{itemize}
    \item With no ancillas ($m=0$), $\dimspanmu=2^n$ regardless of $t$ (\cref{lemma1}).
    \item With no $T$ gates ($t=0$), $\dimspanmu=2^n$ regardless of $m$ (\cref{thm:structural_clifford_povms}).
    \item For large $t$, the unitary can be effectively arbitrary, so informational completeness is possible iff $m\ge n$.
\end{itemize}

To handle the general case, we utilise the gadget picture.
As discussed before, we can characterize the full circuit by a stabilizer group $\calS$ on $n+m+t$ qubits, with $t$ fixed syndromes corresponding to the final projections on the gadget qubits.
Due to \cref{thm:structural_clifford_povms_plus}, projecting the $m$ ancillas removes $m$ generators.
However, because $t$ generators already have fixed syndromes, it matters whether the ancilla projection removes generators that are already fixed or generators that are associated to both eigenvalues $\pm1$. More specifically:
\begin{itemize}
    \item If $m\ge t$, in the best-case scenario the projection removes the $t$ frozen plus $m-t$ active generators, leaving $(n+m)-(m-t)=n+t$ free generators.
    In the worst-case scenario, the projection removes instead $t$ active generators, leaving only $(n+m)-m=n$ free.
    \item If $m \le t$, in the best-case scenario the projection removes only frozen generators, thus keeping all the $n+m$ free ones.
    In the worst-case scenario, it instead removes $m$ of the active generators, leaving again with $(n+m)-m=n$ free ones.
\end{itemize}
We conclude that the post-projection stabilizer group has between $n$ and $n+\min(t,m)$ free generators.
We will almost always focus here on the best-case-scenario cases, where after projecting the ancillas we still have $n+\min(t,m)$ free generators, and furthermore consider the situation when there are enough ancillas, $m\ge t$. In these cases we can simply assume that the stabilizer measurement basis before the ancilla projection is described by an $(n+t)$-qubit stabilizer basis.
\hypertarget{ref:tdoped_part1}{See~\cref{ex:tdoped_1,ex:tdoped_2} for explicit applications of this formalism, and how Heisenberg-evolved measurement states can be described via ``stabilizer groups with some fixed syndromes''.}

\begin{theorem}\label{thm:easy_tn_bound}
    IC $t$-doped POVMs are possible only if $t\ge \frac{2}{\log_2 3}n\approx 1.26 n$.
\end{theorem}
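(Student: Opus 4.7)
My approach is to reduce the $t$-doped problem to the stabilizer framework via the $T$-gadget and then carry out a pigeonhole/counting argument on the Pauli expansion. First, I would apply the gadget construction described above: each $T$ gate is replaced by an ancilla prepared in $\ket T$, a CNOT, and a projection onto $\ket 0$. This absorbs all non-stabilizerness into the $t$ extra magic-state ancillas, so that the effective unitary $\tilde U$ is Clifford on $n+m+t$ qubits and the Heisenberg-evolved basis $\{\ket{\tilde\Phi_b}\}$ is a stabilizer set, with $t$ of its generators ``frozen'' to eigenvalue $+1$ by the gadget post-selection. The effective POVM becomes $\mu_b=(I\otimes\bra{\psi T})\mathbb{P}_{\tilde\Phi_b}(I\otimes\ket{\psi T})$ with $\ket{\psi T}=\ket\psi\otimes\ket{T^{\otimes t}}$, which puts us in the setting of \cref{thm:structural_clifford_povms} except that the ancilla is no longer a stabilizer state.

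Next, I would Pauli-expand $\mathbb{P}_{\tilde\Phi_b}$ over the stabilizer group $\calS\le\tildecalP_{n+m+t}$, generalising the argument that proves \cref{thm:structural_clifford_povms}. Each POVM element becomes $\mu_b\propto \sum_{g\in\calS}(-1)^{b\cdot\xi(g)}\pi_n(g)\,\bra{\psi T}\pi_{m+t}(g)\ket{\psi T}$, and the crucial single-qubit facts $\bra T I\ket T=1$, $\bra T X\ket T=\bra T Y\ket T=1/\sqrt2$, $\bra T Z\ket T=0$ show that the ancilla factor is nonzero exactly when $\pm\pi_m(g)\in\calZ$ and $\pi_t(g)\in\{I,X,Y\}^t$. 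This reduces $\dimspanmu$ to $|\pi_n(V)|$, where $V$ is the set of $g\in\calS$ satisfying both conditions. The $3^t$ valid gadget Paulis (out of $4^t$) is the essential quantitative input.

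The bound $t\ge 2n/\log_2 3$ then follows from a counting argument constrained by isotropy. Let $V_a\le\calS$ denote the subgroup obtained by imposing only the $\calZ$-condition. Since $V_a$ is abelian, the joint image $\pi_{n,t}(V_a)$ is isotropic in $\tildecalP_{n+t}$, giving $|\pi_{n,t}(V_a)|\le 2^{n+t}$. For IC we require $\pi_n(V)=\tildecalP_n$, so every $p\in\tildecalP_n$ must lift to some $g\in V_a$ with $\pi_t(g)\in\{I,X,Y\}^t$; equivalently, the coset $\pi_t(g_p)+\pi_t(N)$ (with $N=V_a\cap\pi_n^{-1}(I)$) must intersect $\{I,X,Y\}^t$. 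Combining the isotropy bound on $|\pi_{n,t}(V_a)|$ with the $3^t$-element cap on $\{I,X,Y\}^t$ and the quotient structure of $\tildecalP_n$ modulo $\pi_n(V_I)$ (where $V_I=V_a\cap\pi_t^{-1}(I)$), a pigeonhole count on distinct cosets of $\pi_t(N)$ in $\tildecalP_t$ that are hit by $\{I,X,Y\}^t$ yields the necessary condition $3^t\ge 4^n$.

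The main obstacle is step three: the naive bounds $|\pi_n(V)|\le |V|\le 2^n(3/2)^t$ or $|\pi_n(V)|\le 2^n\cdot 3^t$ are not of the form needed, so one must combine the isotropic constraint $|\pi_{n,t}(V_a)|\le 2^{n+t}$ with the coset-counting argument in a way that eliminates spurious contributions from $\pi_n(V_I)$ and forces exactly the inequality $3^t\ge 4^n$. Once that counting is set up cleanly, rearranging gives $t\log_2 3\ge 2n$, i.e.\ $t\ge 2n/\log_2 3$, as claimed.
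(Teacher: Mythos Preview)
Your first two steps --- the gadget reduction and the identification of the $\ps{Z}$-free condition via $\langle T|Z|T\rangle=0$ --- are exactly what the paper does. The difficulty you flag in step three is real but largely self-inflicted: you carry both the stabilizer-ancilla condition and the $\ps{Z}$-free condition simultaneously on $n+m+t$ qubits, which produces the tangle of quotients by $\pi_n(V_I)$ and $\pi_t(N)$ that you cannot close. The paper sidesteps this by \emph{first} invoking \cref{thm:structural_clifford_povms_plus} to project out the $m$ stabilizer ancillas, leaving a clean $(n+t)$-qubit stabilizer group $\calS$ before any counting is done; in your language this collapses $V_a$ to $\calS\le\tildecalP_{n+t}$, with $V_I=\calS_n$ and $N=\calS_t$.

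The concrete missing observation is that IC forces $\calS_n=\{I\}$, i.e.\ your $\pi_n(V_I)$ is trivial. Any $g\in\calS_n$ has $\pi_t(g)=I$, so $\pi_n(g)$ commutes with every $\pi_n(h)$, $h\in\calS$; but IC requires these $\pi_n(h)$ to exhaust $\tildecalP_n$, so $\pi_n(g)$ is central in $\tildecalP_n$ and hence the identity. With $\calS_n=\{I\}$ the projection $\pi_t$ is injective on $\calS$, so the $4^n$ distinct $\pi_n$-values needed for IC force $4^n$ distinct $\ps{Z}$-free $\pi_t$-values, and $3^t\ge4^n$ follows in one line. This is precisely the content of the paper's alternative proof (the first proof states the same count more tersely). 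Once you insert this observation, the ``quotient by $\pi_n(V_I)$'' you worry about vanishes and the isotropy/pigeonhole machinery you were assembling becomes unnecessary.
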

\begin{proof}
    In the gadget picture, assume the ancilla projection yields an $(n+t)$-qubit stabilizer measurement.
    IC requires the existence of \textit{at least} $4^n$ strings $g\in \calS$ such that $\langle\psi|\pi_t(g)|\psi\rangle\neq0$.
    The states $\ket T\equiv T\ket+=\frac1{\sqrt2}(\ket0+e^{i\pi/4}\ket1)$ satisfy
    \begin{equation}
        \langle X,\mathbb{P}_T\rangle=\langle Y,\mathbb{P}_T\rangle=\frac{1}{\sqrt2}\,,\quad\langle Z,\mathbb{P}_T\rangle\,=\,0.
    \end{equation}
    Thus we want the surviving $g\in\calS$ are all and only those such that $\pi_t(g)$ contains no $\ps{Z}$ operators; we shall refer to such strings as $\ps{Z}$\textit{-free} strings.
    The total number of possible $\ps{Z}$-free $t$-qubit strings is $3^t$.
    Therefore to achieve informational completeness there must be $4^n$ strings $g\in\calS$ each one associated to a $\ps{Z}$-free substring $\pi_t(g)\in\tildecalP_t$. This gives the necessary condition:
    \begin{equation}
        3^t\ge 4^n
        \iff t\ge \frac{2}{\log_23}n\approx 1.26\, n.
    \end{equation}
\end{proof}
\begin{proof}[Alternative proof]
    This is a different proof of the same statement. We include it because it uses rather different ideas and could therefore be useful as reference for extensions of the result.
    
    Assume as before that projecting the ancillas we are left with an effective stabilizer measurement on the remaining $n+t$ qubits.
    We know from~\cref{sec:entanglement_and_reconstruction} that maximal entanglement, $p=n$, is a necessary condition for full reconstruction, so we also assume it here.
    We thus have by construction $\lvert\calS\rvert=2^{n+t}$, and $\lvert\calS_n\rvert=1$, $\lvert\calS_t\rvert=2^{t-n}$, where $\calS_n,\calS_t$ are the subgroups of $\calS$ that act nontrivially only on $\calH_n$ and $\calH_t$, respectively.
    Furthermore, $\pi_t$, which project Pauli strings on their $\calH_t$ component, is injective on $\calS$, because there is no $g\in \calS$ such that $\pi_t(g)=I$ --- again by the structure of maximally entangled stabilizer states discussed in~\cref{sec:entanglement_and_reconstruction}.
    Finally, we remember that the number of cosets in the quotient group $\calS/\calS_t$ is precisely $4^n$, as
    $\lvert\calS/\calS_t\rvert=
    \lvert\calS\rvert/\lvert\calS_t\rvert=2^{n+t}/2^{t-n}=4^n$.

    We are interested in the cases where each coset $[g]\in\calS/\calS_t$ contains at least one element $g\in[g]$ such that $\pi_t(g)$ is $Z$-free.
    Denote with $\calZ_f$ the set of $Z$-free Pauli operators $g\in \calP_t$. Its size is $\lvert \calZ_f\rvert=3^t$.
    We can then reformulate our requirement as the following constraint:
    \begin{equation}\label{eq:IC_condition}
        (\calZ_f\cap\pi_t(\calS))\pi_t(\calS_t)=\pi_t(\calS). 
    \end{equation}
    The LHS represents here the product of the groups $(\calZ_f\cap\pi_t(\calS))$ and $\pi_t(\calS_t)$, which is the set whose elements are all possible products of elements taken from the two individual groups.
    To understand this condition, observe that $\calZ_f\cap\pi_t(\calS)$ is the set of all $\calH_t$-projections of elements $g\in\calS$ that are $Z$-free, and $\calS_t$ is the generator of the cosets, thus the product $(\calZ_f\cap\pi_t(\calS))\pi_t(\calS_t)$ contains all elements in the cosets, projected on the $t$ qubits, that can be obtained building cosets from the $Z$-free elements.
    Thus~\cref{eq:IC_condition} can be read as stating that we can generate all cosets using only $Z$-free elements as representatives.

    For any abelian group $G$, finite set $X\subseteq G$, and subgroup $H\le G$, we have $\lvert XH\rvert\le \lvert X\rvert \cdot\lvert H\rvert$.
    This tells us that
    \begin{equation}
    \begin{gathered}
        \lvert
        (\calZ_f\cap\pi_t(\calS))\pi_t(\calS_t)
        \rvert
        \le 
        \lvert \calZ_f\cap\pi_t(\calS))\rvert
        \cdot
        \lvert \pi_t(\calS_t)\rvert
        \\ \le \lvert \calZ_f\rvert\cdot\lvert \pi_t(\calS_t)\rvert = 3^t 2^{t-n}.
    \end{gathered}
    \end{equation}
    Using this in~\cref{eq:IC_condition} gives us the bound
    \begin{equation}
        2^{n+t}=\lvert \pi_t(\calS)\rvert
        \le 3^t 2^{t-n}
        \iff 4^n \le 3^t
    \end{equation}
\end{proof}

The bound in~\cref{thm:easy_tn_bound} is a necessary but far from sufficient condition.
Even when $3^t\ge 4^n$, there is no guarantee that each coset actually contains a $\ps{Z}$-free element.
We know in particular that $2n$ $T$ gates are sufficient:
\begin{theorem}\label{thm:tgen}
    IC $t$-doped POVMs are possible for all $t\ge2n$.
\end{theorem}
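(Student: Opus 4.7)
The plan is to reduce to a single-qubit construction via tensor products, and to handle the case $t>2n$ by padding. Concretely, I would first exhibit an IC $t$-doped POVM for $n=1$, $t=2$, $m=1$; the general $n$ case then follows by taking the $n$-fold tensor product of this building block, using $n$ ancillas and $2n$ $T$ gates in total, since the tensor product of $n$ single-qubit IC POVMs is an IC POVM on $n$ qubits, giving $\dimspanmu=4^n$.

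The single-qubit step is the core of the argument. In the gadget picture, this amounts to exhibiting a $4$-qubit stabilizer group $\calS$ such that, via the formulas underlying~\cref{thm:structural_clifford_povms_plus} applied with physical ancilla $\ket 0$ and two gadget magic ancillas $\ket T$, the four effective POVM elements are linearly independent on the data qubit. The contributing elements of $\calS$ are precisely those $g\in\calS$ with $\pi_m(g)\in\{\ps I,\ps Z\}$ and $\pi_t(g)$ $\ps Z$-free; the challenge is to arrange the generators so that these surviving elements project to all four Paulis on the data qubit, distributed over distinct labels $(\xi_1,\xi_2)\in\{0,1\}^2$ so that they remain distinguishable after the inverse discrete Fourier transform. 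The cleanest route is to specialize~\cref{thm:rank_tgen} at $t=2n$, obtaining the lower bound $\max\dimspanmu\ge 2^{-n}(3^2-1)^n=4^n$; combined with $\dimspanmu\le 4^n$ this forces equality and produces the required construction. For $t>2n$ I would then pad by applying $t-2n$ extra $T$ gates to an ancilla prepared in $\ket 0$: since $T\ket 0=\ket 0$ these gates are inert but still count toward the $T$-budget, so the effective POVM remains IC with the correct $T$-count.

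The main obstacle is the explicit $n=1$ step: the interplay between the $\ps Z$-free requirement on two gadget qubits and the stabilizer commutation relations is rigid, and naive choices of generators inadvertently collapse two Pauli directions onto the same label $(\xi_1,\xi_2)$, giving $\dimspanmu=3$ rather than $4$. The quantitative bound of~\cref{thm:rank_tgen} guarantees that a valid stabilizer group exists, so a careful choice of generators --- or a direct appeal to that bound --- settles the $n=1$ case, after which the tensor-product and padding steps are routine.
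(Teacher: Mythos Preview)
Your tensor-product-and-padding strategy is correct and is essentially the paper's own approach: the explicit IC circuit in \cref{app:circ} is literally $U=\bigotimes_k U_k$, and the $\calS_t$ chosen in the proof of \cref{thm:tgen} consists of $n$ disjoint copies of the two-qubit generator $\ps{XZ}$. The $n=1$ base case you flag as the obstacle is resolved directly by $\pi_t(\calS_t)=\langle\ps{XZ}\rangle$, whose four centralizer cosets $\{\ps{II},\ps{XZ}\}$, $\{\ps{XI},\ps{IZ}\}$, $\{\ps{YY},\ps{ZX}\}$, $\{\ps{YX},\ps{ZY}\}$ each contain a $\ps{Z}$-free representative (this is \cref{ex:ent4}; a concrete two-$T$ circuit realizing it appears in \cref{ex:tdoped_2}). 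Your fallback to \cref{thm:rank_tgen} is a forward reference and, while not circular, its proof at $t=2n$ is precisely this same disjoint-$\ps{XZ}$ construction --- so invoking it does not bypass the combinatorial step you identified as hard, it merely relocates it.
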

\begin{proof}
    Set $t=2n$ and assume $p=n$.
    We want to find a $(n+t)$-qubit stabilizer group $\calS$ such that each coset in the quotient $\calS/\calS_t$ contains at least one element that is $\ps{Z}$-free in its $\calH_t$ component.
    Equivalently, it suffices to find a $t$-qubit abelian group $\pi_t(\calS_t)$ such that each coset in $C(\pi_t(\calS_t))/\pi_t(\calS_t)$ has a $\ps{Z}$-free representative.

    Take $\calS_t\equiv \langle h_1,..., h_{t-n}\rangle$ with $h_i=\ps{I}_n\ps{X}_{2i-1}\ps{Z}_{2i}$, that is, take generators that have disjoint support, act trivially on $\calH_n$, and like $\ps{XZ}$ on their support.
    We already know from~\cref{ex:ent4} that $\ps{XZ}$ induces $\ps{Z}$-free cosets.
    More specifically, a single such generator induces an embedding of $\tildecalP_1$ into two-qubit cosets with one $\ps{Z}$-free representative each:
    \begin{equation}
        \ps X\to \{\ps{XI},\ps{IZ}\},
        \quad
        \ps Y\to \{\ps{YY},\ps{ZX}\},
        \quad
        \ps Z \to \{\ps{YZ},\ps{ZI}\}.
    \end{equation}
    The same scheme works for $t=2n$ qubits, generalised as
    \begin{equation}
    \begin{gathered}
        \ps X_i\to \{\ps{X}_{2i-1},\ps{Z}_{2i}\},
        \quad
        \ps Y_i\to \{\ps{Y}_{2i-1}\ps{Y}_{2i},\ps{Z}_{2i-1}\ps{X}_{2i}\},
        \\
        \ps Z_i \to \{\ps{Y}_{2i-1}\ps{Z}_{2i},\ps{Z}_{2i-1}\}.
    \end{gathered}
    \end{equation}
    An explicit $t$-doped circuit with $t=2n$ that is IC is reported in~\cref{app:circ}.
\end{proof}

In fact, we strongly believe that IC $t$-doped POVMs require $t\ge 2n$, although we do not have a formal proof of this stronger statement, which we therefore leave as a conjecture:

\begin{conjecture}\label{conj:2n}
    There are IC $t$-doped POVMs iff $t\ge 2n$.
\end{conjecture}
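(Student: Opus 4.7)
Since~\cref{thm:tgen} already establishes sufficiency, only the necessity $t\ge 2n$ remains. By~\cref{thm:entanglement_and_reconstruction}, IC forces maximal entanglement ($p=n$) of the Heisenberg-evolved measurement states between $\calH_n$ and $\calH_t$, and the gadget reformulation of~\cref{sec:tdoped_POVMs} then reduces the problem to the following combinatorial question: does there exist an abelian subgroup $K\le\tildecalP_t$ of order $2^{t-n}$ such that every coset of $K$ in its centralizer $C(K)$ contains a $\ps{Z}$-free element? The plan is to show the answer is no whenever $t<2n$.

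The central step is to establish, for $t<2n$, an upper bound on the number $c(K)$ of covered cosets that matches the constructive lower bound of~\cref{thm:rank_tgen}, namely $c(K)\le 2^{-\ell}(3^{a+1}-1)^r(3^a-1)^{\ell-r}$ with $\ell=t-n$, $a=\lfloor t/\ell\rfloor$, $r=t-a\ell$. A direct algebraic check verifies this bound is strictly less than $4^n$ whenever $t<2n$; for instance, for $t=2n-1$ it simplifies to $\tfrac{13}{16}\cdot 4^n$. I would first settle the base case $\ell=1$, where $K=\langle h\rangle$ is generated by a single Pauli $h$ of weight $w$ with $z$ $\ps{Z}$'s: a Fourier/character sum yields $|\{\ps{Z}\text{-free}\}\cap C(h)|=\tfrac12[3^t+3^{t-w}(-1)^z]$, and a qubit-wise analysis of when $v$ and $vh$ are simultaneously $\ps{Z}$-free shows $c(\langle h\rangle)\le(3^t-1)/2$, with equality iff $h$ has full support and odd $z$ (e.g., $h=\ps{XYZ\cdots}$). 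For $\ell\ge 2$, the strategy is to show that an optimal $K$ can be chosen with generators supported on pairwise disjoint blocks of $a$ or $a+1$ qubits, whence $c$ factorizes over blocks and the single-generator bound extends to the claimed product.

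The main obstacle is justifying this disjoint-block reduction when $\ell\ge 2$. The heuristic picture --- that overlapping generator supports should only reduce coverage, because the $\ps{Z}$-free constraints interact across generators --- must be made rigorous via a careful analysis of the symplectic quotient $C(K)/K\cong\mathbb{F}_2^{2n}$ and of how the $\ps{Z}$-free set projects onto it. A natural tactic is an exchange lemma stating that any $K$ with overlapping generator supports can be replaced by an equivalent $K'$ with disjoint supports without decreasing $c$, enabling a stepwise reduction to the disjoint case. An alternative route is induction on $\ell$ or on $n$, conditioning on a carefully chosen qubit and reducing to a smaller instance, which bypasses the exchange step but demands delicate bookkeeping of the residual stabilizer structure after excision.
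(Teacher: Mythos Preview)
The statement you are attempting to prove is explicitly left as an \emph{open conjecture} in the paper. The paper does not supply a proof of necessity; the ``Discussion'' following \cref{conj:2n} reports only numerical evidence --- exhaustive enumeration of all relevant $\calS_m$ for $n\le 3$ and random sampling of $10^6$ circuits for $n=4$ --- together with consistency with the constructive bounds of \cref{thm:rank_tlen,thm:rank_tgen}. The only analytic necessity bound the paper establishes is the weaker $t\ge 2n/\log_2 3$ of \cref{thm:easy_tn_bound}.

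Your proposal is therefore not a reproduction of a paper proof --- there is none --- but an attempt to settle the conjecture. The $\ell=1$ step you sketch can indeed be completed: the key observation is that for any $v$ with both $v$ and $vh$ $\ps{Z}$-free, the number of positions where $v$ anticommutes with $h$ equals exactly $n_Z(h)$, independently of $v$; splitting on the parity of $n_Z(h)$ and combining with the character-sum count of \cref{sec:Zfree_centralisers_count} then yields $c(\langle h\rangle)\le(3^t-1)/2$. However, the $\ell\ge 2$ step --- your ``disjoint-block reduction'' or ``exchange lemma'' --- is precisely the missing ingredient, and you correctly flag it as the main obstacle without resolving it. Note also that you are implicitly conjecturing that the constructive lower bound of \cref{thm:rank_tgen} is tight as an \emph{upper} bound, which the paper itself does not claim (it writes $A\ge\ldots$, not $A=\ldots$). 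The heuristic that overlapping generator supports can only reduce coverage is plausible but unproven, and the paper's authors evidently did not find such an argument either. Your proposal should be read as a credible research plan that sharpens the combinatorial target, not as a proof.
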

\textit{Discussion.}
We already showed that IC circuits exist for $t\ge 2n$. Exploiting the fact that the informational-completeness of any given circuit can be fully characterised by the $\calS_m$ subgroup, as discussed in~\cref{sec:entanglement_and_reconstruction}, we performed an exhaustive numerical search and verified that no IC $t$-doped POVM with $t<2n$ exist for $n=1,2,3$. We also sampled $10^6$ random $t$-doped circuits with $n=m=4$, various $t$, and never found examples of IC circuits for $t<2n$. The $t=2n$ threshold is also perfectly consistent with the bounds in~\cref{thm:rank_tlen,thm:rank_tgen} proved below.

The situation somewhat simplifies when $t\le n$:
\begin{theorem}\label{thm:rank_tlen}
    $t$-doped POVMs corresponding to maximally entangled stabilizer states, for $t\le n,m$, have $\dimspanmu\le 2^n(\frac32)^t$.
\end{theorem}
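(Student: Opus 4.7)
The plan is to reduce the $t$-doped setting to the Clifford machinery of~\cref{sec:clifford_povms,sec:entanglement_and_reconstruction} via the gadget trick, and then to count $Z$-free Pauli strings. First I would replace each $T$ gate with its gadget, so that the non-stabilizer circuit becomes a Clifford circuit on $n+m+t$ qubits whose ancilla register is initialized in $\ket\psi\otimes\ket T^{\otimes t}$ (the trailing $\ket 0$ projections on the gadget qubits are absorbed into the physical measurement). Applying \cref{thm:structural_clifford_povms_plus} to project away the $m$ stabilizer ancillas, in the best case this collapses the problem to an effective $(n+t)$-qubit stabilizer measurement, described by some stabilizer group $\calS\le\tildecalP_{n+t}$, with the remaining ancillary state being $\ket T^{\otimes t}$ living on the gadget register $\calH_t$. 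The worst-case projection only removes generators, hence only decreases $\dimspanmu$; so an upper bound derived in the best case upper-bounds the general situation as well.

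Next I would invoke the entanglement decomposition of~\cref{thm:entanglement_and_reconstruction}. The hypothesis of maximal entanglement with $t\le n$ forces $p=t$, hence $\calS_t=\{I\}$ and $|\calS_n|=2^{n-t}$. Consequently $\pi_n$ is injective on $\calS$, while $\pi_t$ has kernel $\calS_n$, so $|\pi_t(\calS)|=|\calS|/|\calS_n|=2^{2t}=4^t$, that is $\pi_t(\calS)=\tildecalP_t$. By~\cref{cor:rank_multiples}, each coset $[g]\in\calS/\calS_n\calS_t=\calS/\calS_n$ contributes either $0$ or exactly $|\calS_n|=2^{n-t}$ distinct Paulis $\pi_n(g)$ to the reconstructible set, depending on whether $\ket T^{\otimes t}$ annihilates $\pi_t(g)$ or not.

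I would then compute the ``survival'' condition explicitly. Since $\langle T|P|T\rangle\neq 0$ iff $P\in\{\ps I,\ps X,\ps Y\}$, the expectation $\langle T^{\otimes t}|\pi_t(g)|T^{\otimes t}\rangle$ is nonzero iff $\pi_t(g)$ is $\ps Z$-free, which gives at most $3^t$ admissible cosets out of the $4^t$ total. Multiplying by the $2^{n-t}$ operators carried by each surviving coset yields
\begin{equation}
    \dimspanmu\le 3^t\cdot 2^{n-t}=2^n\left(\tfrac{3}{2}\right)^t,
\end{equation}
which is exactly the claimed bound.

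The only delicate point is the reduction step itself: one must argue that neither choosing a non-best-case projection of the stabilizer ancillas nor deviating from the maximally entangled assumption inside the effective $(n+t)$-qubit picture can increase the count. The first is handled by noting that projections only shrink $\calS$, the second by the fact that the counting argument above is monotone in $p$ (smaller $p$ gives fewer cosets in $\calS/\calS_n\calS_t$), so $p=t$ is indeed the worst case for the upper bound. Beyond this bookkeeping the remainder of the proof is elementary counting.
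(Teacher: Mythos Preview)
Your argument is correct and essentially identical to the paper's: after the gadget reduction and stabilizer-ancilla projection one works with an $(n+t)$-qubit stabilizer group $\calS$, maximal entanglement with $t\le n$ forces $\calS_t=\{I\}$ and $\pi_t(\calS)=\tildecalP_t$, and counting the $3^t$ $\ps Z$-free strings (each carrying $2^{n-t}$ data-side Paulis) gives the bound. Your closing monotonicity-in-$p$ remark is superfluous---maximal entanglement is a hypothesis of the theorem, so $p=t$ is given---and the monotonicity claim itself would need more justification (for $p$ in the range $t\log_2(3/2)<p<t$ neither the bound $2^{n+p}$ from \cref{thm:entanglement_and_reconstruction} nor the naive count $2^{n-p}3^t$ is below $2^n(3/2)^t$); but this does not affect the validity of your main argument.
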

\begin{proof}
    We again operate under the assumption that projecting the ancillas leaves behind an $(n+t)$-qubit stabilizer group with $n+t$ free generators.

    Maximal entanglement with $t\le n$ ensures that $\calS_t$ is trivial, and all $4^t$ strings appear in $\{\pi_t(g):\, g\in\calS\}$, each one corresponding to one of the $2^{n-t}$ cosets in $\calS/\calS_n$.
    Of the $4^t$ strings in $\tildecalP_t$, precisely $3^t$ are $\ps{Z}$-free.
    Thus the corresponding $\dimspanmu$, which equals the number of $\pi_n(g)$ corresponding to a $\ps{Z}$-free $\pi_t(g)$, is at most $2^{n-t} 3^t=2^n(\frac32)^t$.
\end{proof}
Note the consistency of the statement of~\cref{thm:rank_tlen} with~\cref{cor:rank_multiples}: we work here in the maximally entangled case with $t\le n$, thus $p=t$. Thus~\cref{thm:rank_tlen} tells us that the maximum of $\dimspanmu$ must be a multiple of $2^{n-p}=2^{n-t}$, which is precisely what we also found here.
\hypertarget{ref:tdoped_part2}{See~\cref{ex:tdoped_3}} for an application example of this formalism.

We thus found that all maximally entangled cases with $t \le n$ --- provided information survives the ancilla projection in the gadget picture --- give $\dimspanmu=2^n (3/2)^t$. This is because in these cases all $4^t$ strings appear projecting on $\tildecalP_t$, and each one corresponds to a fixed number of strings on $\tildecalP_n$.
This contrasts with what happens for $t>n$, where all $\pi_n(g)\in\tildecalP_n$ appear in the coset decomposition $\calS/\calS_t$, but the associated cosets contain multiple elements that might or might not survive the projection onto $\ket{T}^{\otimes t}$.
Indeed, for $t>n$, having maximal entanglement, $p=n$, ensures that \textit{for some choice of $\ket\psi$} the measurement is IC, but fixing $\ket\psi=\ket T^{\otimes t}$ complicates things considerably, as it is often the case that entire cosets are annihilated by the projection, thus reducing $\dimspanmu$. Nonetheless, we have the following:
\begin{theorem}\label{thm:rank_tgen}
    For $t> n$, we have $\dimspanmu\le A$ for some
    \begin{equation}\label{eq:formula_weird_rank}
        A \ge 2^{-\ell} (3^{a+1}-1)^r (3^a-1)^{\ell-r},
    \end{equation}
    with $a\equiv \lfloor \frac{t}{\ell}\rfloor$, $r\equiv t-a\ell$, and $\ell\equiv t-n$.
\end{theorem}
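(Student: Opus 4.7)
The plan is to exhibit a $t$-doped POVM whose informational content matches the claimed bound. Working in the gadget picture with enough standard ancillas ($m\ge t$) and selecting the best-case projection pattern, the effective measurement is described by a maximally entangled ($p=n$) stabilizer group $\calS\le\tildecalP_{n+t}$ with $\lvert\calS_t\rvert=2^\ell$ and $\ell=t-n$. Under maximal entanglement, $\pi_n:\calS/\calS_t\to\tildecalP_n$ is a bijection onto the $4^n$ Pauli directions on the data qubits, and a direction $\pi_n(g)$ is reconstructible iff the corresponding coset admits a representative $g$ whose $\calH_t$-projection $\pi_t(g)$ is $\ps Z$-free --- since $\bra{T}^{\otimes t}P\ket{T}^{\otimes t}\ne 0$ iff $P\in\tildecalP_t$ is $\ps Z$-free. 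Thus $\dimspanmu$ equals the number of cosets of $\calS_t$ in $\calS$ admitting such a representative.

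Next, I partition the $t$ gadget qubits into $\ell$ disjoint blocks of sizes $s_1,\dots,s_\ell$ chosen as the balanced partition ($r$ blocks of size $a+1$ and $\ell-r$ of size $a$), and take $\calS_t=\langle h_1,\dots,h_\ell\rangle$ with each $h_i$ supported on the whole of block $i$ and containing an odd number of $\ps Z$s (for instance a single $\ps Z$ with the remaining positions filled with $\ps X$s). The completion of $\calS_t$ to a full maximally-entangled $\calS$ is always possible by picking $n$ anticommuting generator pairs realising the symplectic basis of $C_\omega(\calS_t)/\calS_t$, and by the discussion around~\cref{cor:rank_multiples} the resulting $\dimspanmu$ depends only on $\calS_t$ and $p$, so this completion controls \emph{which} Pauli directions are reconstructed but not their count. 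Because the $h_i$ have disjoint supports, centralizers and cosets factorize block-by-block, so $\dimspanmu=\prod_{i=1}^\ell N_i$, with $N_i$ the number of cosets of $\langle h_i\rangle$ in the block centralizer $C_\omega(\langle h_i\rangle)\le\tildecalP_{s_i}$ admitting a $\ps Z$-free representative.

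The technical core is computing $N_i$. Writing each single-qubit Pauli as $(a_k,b_k)\in\ZZ_2^2$, the centralizer condition reads $\sum_k\omega(P_k,h_k)\equiv 0\pmod 2$ and the $\ps Z$-free constraint is $(a_k,b_k)\ne(0,1)$. Tracking the parity of the centralizer sum via a formal variable $z$, each position contributes $f_{\ps X}(z)=f_{\ps Y}(z)=2+z$ and $f_{\ps Z}(z)=1+2z$; evaluating $\prod_k f_{h_k}$ at $z=\pm1$ and half-summing yields the number of $\ps Z$-free elements in the centralizer as $(3^s+(-1)^{n_Z})/2$, where $n_Z$ is the number of $\ps Z$s in $h$. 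A parallel count on pairs $(c,ch)$ shows that the number of $c$ making both $c$ and $ch$ simultaneously $\ps Z$-free equals $2^s$ for $n_Z$ even and $0$ for $n_Z$ odd. Hence, for $n_Z$ odd no coset contains two $\ps Z$-free elements, and $N_i$ coincides with the $\ps Z$-free count $(3^{s_i}-1)/2$. Assembling the factors, $\dimspanmu=\prod_i(3^{s_i}-1)/2=2^{-\ell}(3^{a+1}-1)^r(3^a-1)^{\ell-r}$, as required.

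The main obstacle I expect is the generating-function bookkeeping that gives $N_i=(3^{s_i}-1)/2$, in particular the parity argument ruling out two $\ps Z$-free elements in the same coset when $n_Z$ is odd. Optimality of the balanced partition in $\prod(3^{s_i}-1)$ is a routine convexity/rearrangement argument on $\log(3^{s_i}-1)$ and simply selects the sizes $a$ and $a+1$ appearing in the final formula; the completion of $\calS_t$ to a maximally entangled stabilizer group and the treatment of the ancilla projection in the best-case scenario follow directly from the structural results of~\cref{sec:entanglement_and_reconstruction,sec:tdoped_POVMs}.
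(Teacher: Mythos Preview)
Your proposal is correct and follows essentially the same approach as the paper: both construct $\calS_t$ from $\ell$ disjointly-supported generators of the $\ps{X\cdots XZ}$ type on a balanced partition of the $t$ gadget qubits, argue that each coset contains at most one $\ps Z$-free element, and count those elements to obtain $(3^{s_i}-1)/2$ per block. Your generating-function derivation of the $\ps Z$-free centralizer count and the parity argument ruling out two $\ps Z$-free elements in a coset are slight technical variations on the paper's use of the Appendix formula~\eqref{eq:Zfree_count} and its direct case analysis, but the construction and logic are the same.
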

\begin{proof}
    Here $A$ represents the true maximum achievable value of $\dimspanmu$, when maximising it over all possible $t$-doped circuits with fixed $n,m$.
    This theorem is phrased this way because we do not prove what the real maximum value of $\dimspanmu$ is, but rather just find examples of measurements where $\dimspanmu=A$.
    Although we did not find any such example by numerical search, it is possible that there are other measurements that give $\dimspanmu>A$.

    We prove the statement providing explicit constructions in terms of generators for $\calS/\calS_t$, fixing $p=n$ so that $\calS_n=\{\ps{I}\}$ and $\calS_t$ has $\ell=t-n$ generators.

    For $t=n+1$, $\ell=1$, take a generator with all $X$ and a single $Z$, such as $\pi_t(\calS_t)=\langle \ps{X\cdots XZ} \rangle$.
    Then each coset contains at most one $\ps{Z}$-free string: indeed, for a given $g$ to commute with $\ps{X}\cdots\ps{XZ}$, either $g_{n+1}$ (the $(n+1)$-th qubit in $g$) commutes with $\ps{Z}$, hence $g_{n+1}=\ps{I}$ and thus $(g\cdot \ps{X\cdots XZ})_{n+1}=\ps{Z}$, or some other element of $g$ must anticommute with $\ps{X}$, thus being equal to $\ps{Y}$, and again multiplying by $\ps{X\cdots XZ}$ we would get a $\ps{Z}$ in the resulting string.
    Furthermore, the total number of $\ps{Z}$-free elements in the centraliser $C(\ps{X\cdots XZ})$ is $(3^t-1)/2$, as shown in~\cref{sec:Zfree_centralisers_count}.
    Thus $(3^t-1)/2$ is precisely the number of cosets with $\ps{Z}$-free elements.
    This matches~\cref{eq:formula_weird_rank} because $a=t$, $r=0$.

    For $t=n+2$, $\ell=2$, $\dimspanmu=A$ is achieved with pairs of generators with (i) disjoint support, (ii) each one having the $\ps{X\cdots XZ}$ pattern of the $\ell=1$ case, and (iii) with supports divided among the $t$ qubits as evenly as possible.
    So for example for $n=2$, this means to take $\pi_t(\calS_t)=\langle \ps{XZII},\ps{IIXZ}\rangle$, for $n=3$ to $\pi_t(\calS_t)=\langle \ps{XXZII}, \ps{IIIXZ}\rangle$, etc.
    By an argument similar to the one used for $\ell=1$, each coset generated by these $\pi_t(\calS_t)$ contains at most one $\ps{Z}$-free element, and furthermore its centraliser is the product of the centralisers of each of its generators.
    Thus $\lvert C(\pi_t(\calS_t))\rvert$ equals $\frac1{2^2}(3^{t/2}-1)^2$ for even $t$, and $\frac1{2^2}(3^{(t+1)/2}-1)(3^{(t-1)/2}-1)$ for odd $t$, which can be written concisely as~\cref{eq:formula_weird_rank}.

    The same pattern continues for larger $\ell$. In each case, we take $\ell$ generators with disjoint support, each containing the $\ps{X\cdots XZ}$ pattern, and dividing the available $t$ by $\ell$ as evenly as possible.
    Then multiplying the centralisers of these generators via~\cref{sec:Zfree_centralisers_count} we get the result.
\end{proof}
Note in particular that for $t=2n$, we have $\ell=n$, $a=2$, $r=0$, and we recover the IC case because $2^{-n}(3^2-1)^n=4^n$.
While for $t=n$ we have $\dimspanmu=3^n$ from~\cref{thm:rank_tlen}, and for $t=0$ we have $\dimspanmu=2^n$ because we revert back to a simple stabilizer measurement on the $n$ data qubits.

\section{Conclusions}
\label{sec:conclusions}

We investigated the information retrievable from single-setting measurement scenarios employing stabilizer operations, both with and without $T$ gates injected in the circuit. We contextually analysed the role of entanglement in determining the amount of information accessible from such measurements.

To our knowledge, this work presents the first investigation of non-stabilizerness from a metrological perspective, and paves the way for the development of single-setting estimation strategies on circuit-based platforms.

Our results bridge two previously distinct research directions --- the resource theory of non-stabilizerness and the theory of quantum measurements --- and show that the very same resource enabling quantum computational advantage is also essential to realize highly informative measurements, an irremissible ingredient in modern estimation protocols such as shadow tomography and quantum machine learning algorithms based on fixed reservoirs~\cite{innocenti2023potential,vetrano2025state}.
Among the potential repercussions of our analysis, we mention the possibility to realize efficient single-setting state-estimation and shadow tomography protocols. Indeed, although such protocols are known to achieve favorable scalings under random measurement ensembles, recent work~\cite{acharya2021ShadowTomographyBased,nguyen2022OptimizingShadowTomography,innocenti2023shadow} has shown that comparable efficiencies are achievable in single-setting scenarios. Our results provide the explicit non-stabilizerness thresholds required for these protocols, thus paving the way for their practical implementation on near-term circuit-based platforms.

Our study suggests several avenues for future research:
\begin{itemize}
    \item \textit{Design and robustness of circuit architectures}.
    A natural next step is to identify what differentiates circuits with identical doping levels but differing reconstruction performance, and to determine how these differences manifest under realistic experimental noise.
    Although all IC circuits are equivalent in the context of this work, informational completeness is necessary but alone does not guarantee good estimation performances.
    A thorough investigation of the resource cost associated with performing, for instance, shadow tomography using different IC $t$-doped POVMs, could inform the design of minimal-length universal circuits for quantum extreme learning machines (QELMs) and, more generally, for single-setting state-tomography protocols that remain robust to finite-statistics effects.
    \item \textit{Interplay between magic and entanglement}. Clarifying the interplay between magic and entanglement remains an important open problem: while the latter is crucial for distributing information and can be generated by Clifford operations, high-magic states often require only local operations, and the two resources likely occupy largely disjoint regions of Hilbert space, with limited overlap~\cite{gu2024doped,gu2025magic}.
    \item \textit{Formal proof of the conjectures}. Formally proving~\cref{conj:2n,conj:maxentmaxrank} would be valuable from a foundational standpoint and would contribute to a clearer understanding of the metrological role of non-stabilizerness in single-setting measurement scenarios.
    \item \textit{Non-stabilizerness cost of QELMs}.
    \Cref{thm:structural_clifford_povms} implies that QELMs, and more generally any single-setting measurement strategy, fundamentally require magic to reconstruct arbitrary observables. This raises the question of the non-stabilizerness cost required by QELMs to enable full reconstruction, when the reservoir interaction is implemented as a $t$-doped circuit.
    Studying the interplay between non-stabilizerness and quantum machine learning is therefore another interesting direction of future research.
\end{itemize}

\acknowledgments
GLM acknowledge funding from the European Union - NextGenerationEU through the Italian Ministry of University and Research under PNRR-M4C2-I1.3 Project PE-00000019 "HEAL ITALIA" (CUP B73C22001250006 ). 
SL, AF, and GMP  
acknowledge support by MUR under PRIN Project No. 2022FEXLYB.
Quantum Reservoir Computing (QuReCo). LI, SL, MP, and GMP acknowldge funding from the “National Centre for HPC, Big Data and Quantum Computing (HPC)” Project CN00000013 HyQELM – SPOKE 10.
MP is grateful to the Royal Society Wolfson Fellowship (RSWF/R3/183013), the Department for the Economy of Northern Ireland under the US-Ireland R\&D Partnership Programme, the PNRR PE Italian National Quantum Science and Technology Institute (PE0000023), and the EU Horizon Europe EIC Pathfinder project QuCoM (GA no.~10032223).

\bibliography{bibliography}

\appendix

\section{Stabilizer formalism}
\label{app:stab}

The stabilizer formalism is widely used in the context of quantum error-correcting codes and fault-tolerant quantum computation techniques. The formalism is built around the properties of the Pauli and Clifford groups. Quantum circuits that involve preparation of a measurement on a computational basis and Clifford gates are known to be classically simulable \cite{aaronson2004improved}.

The \emph{$n$-qubit Pauli group} $\mathcal{P}_n$ is the group generated by the $n$-fold tensor products of the single-qubit Pauli matrices, $\{I,X,Y,Z\}$, along with multiplicative factors of $\pm 1$, $\pm i$. The \emph{Clifford group} $\mathcal{C}_n$ is defined as the normalizer of the Pauli group in the unitary group $U(2^n)$
\begin{equation}
\mathcal{C}_n = \{ U \in U(2^n) \mid UPU^\dagger \in \mathcal{P}_n, \; \forall P \in \mathcal{P}_n \}.
\end{equation}
That is, the Clifford group consists of all unitaries that map Pauli operators to Pauli operators under conjugation. The Clifford group is generated by the Hadamard gate $H$, the phase gate $S$, and the controlled-NOT gate $\text{CNOT}$
\[
H = \frac{1}{\sqrt{2}}\begin{pmatrix} 1 & 1 \\ 1 & -1 \end{pmatrix}, \quad
S = \begin{pmatrix} 1 & 0 \\ 0 & i \end{pmatrix}, \quad
\text{CNOT} = \begin{pmatrix}
I & 0 \\
0 & X 
\end{pmatrix}.
\]
A quantum state $\ket{\psi}$ is called a \emph{stabilizer state} if there exists an abelian subgroup $\mathcal{S} \subset \mathcal{P}_n$, called the \emph{stabilizer group}, such that
\[
P \ket{\psi} = \ket{\psi}, \quad \forall P \in \mathcal{S},
\]
and $\mathcal{S}$ is maximal, i.e., it has $2^n$ elements and stabilizes a unique $n$-qubit state. Each generator of the stabilizer group is a Pauli operator, and there are $n$ independent generators. Stabilizer states include computational basis states, Bell states, GHZ states, and many other entangled states that can be prepared using only Clifford circuits.

The tableau representation provides an efficient and compact way to describe stabilizer states and simulate their evolution under Clifford operations using classical computation. It captures the action of the stabilizer group generators using binary arithmetic over $\mathbb{F}_2$ (the finite field with two elements), enabling simulations that scale polynomially with the number of qubits.
An $n$-qubit stabilizer state is fully described by an abelian group $\mathcal{S}$ of $2^n$ Pauli operators with $n$ independent generators $\{g_1, \dots, g_n\}$. Each generator $g_i$ can be expressed in terms of its tensor product of Pauli operators:
\[
g_i = i^{k_i} X^{\mathbf{x}_i} Z^{\mathbf{z}_i},
\]
where $\mathbf{x}_i, \mathbf{z}_i \in \mathbb{F}_2^n$ are binary vectors indicating the presence of $X$ and $Z$ operators on each qubit, and $i^{k_i}$ is an overall phase factor.

These generators are organized into a binary matrix called the \emph{tableau}, consisting of $n$ rows (one per generator) and $2n + 1$ columns:
\[
\text{Tableau} =
\left[
\begin{array}{c|c|c}
\mathbf{X} & \mathbf{Z} & \mathbf{r}
\end{array}
\right] \in \mathbb{F}_2^{n \times (2n + 1)}.
\]
The left half encodes the $X$ components, the center encodes the $Z$ components, and the final column $\mathbf{r} \in \mathbb{F}_2^n$ records the sign (phase) information via
\[
g_i \ket{\psi} = (-1)^{r_i} \ket{\psi}.
\]
The overlap between two stabilizer states $\psi_{1,2}$, with stabilizer group $\mathcal{S}_{1,2}$ respectively is \cite{kueng2015qubit}
\begin{equation}
\label{eq:overlap}
    \langle \psi_1,\psi_2\rangle\,=\,\left\{
    \begin{matrix}
        2^{-n}|\mathcal{S}_1\cap\mathcal{S}_2|& \text{ if all phases match on } \mathcal{S}_1\cap \mathcal{S}_2\\
        0 & \text{otherwise}
    \end{matrix}
    \right.
\end{equation}
Clifford gates preserve the Pauli group under conjugation. That is, if $U$ is a Clifford gate and $P \in \mathcal{P}_n$, then $U P U^\dagger \in \mathcal{P}_n$. Hence, applying a Clifford gate to a stabilizer state corresponds to updating its stabilizer generators by conjugation
\[
g_i \mapsto U g_i U^\dagger.
\]

This action can be represented by updating the tableau, and each Clifford gate has an efficient tableau update rule. For example:
\begin{itemize}
    \item \textbf{Hadamard gate} $H_j$ swaps the $X$ and $Z$ components for qubit $j$ in each generator and flips the sign if both are 1.
    \item \textbf{Phase gate} $S_j$ maps $X_j \mapsto Y_j$, i.e., adds the $X$ column to the $Z$ column for qubit $j$.
    \item \textbf{CNOT gate} $\text{CNOT}_{j,k}$ maps
    \[
    X_j \mapsto X_j X_k, \quad Z_k \mapsto Z_j Z_k.
    \]
    This corresponds to row-wise XOR operations on the appropriate $X$ and $Z$ bits.
\end{itemize}

Each of these gate operations can be implemented by a series of bitwise row and column manipulations on the tableau. Importantly, the update cost is $O(n^2)$ per gate, which makes simulation of Clifford circuits scalable. In addition to representing states, tableaux can also encode Clifford gates themselves by considering their action on a set of Pauli basis elements. A Clifford gate $U$ acts on the Pauli group via conjugation:
\[
U P_i U^\dagger = P'_i, \quad P_i \in \mathcal{P}_n,
\]
and this mapping can be described by a $2n \times 2n$ symplectic matrix over $\mathbb{F}_2$. The gate tableau tracks how each $X_j$ and $Z_j$ basis element transforms, allowing the gate to be applied to any stabilizer state tableau via matrix multiplication and sign rule updates.\\

\section{\texorpdfstring{$\mathbf{2n}$}{2n}-universal circuit}
\label{app:circ}
Here we report the $2n$ $t$-doped reconstructing circuit.
Despite the apparent "parallel" structure of the circuit note that it is completely equivalent to the series $T$-doping scheme showed in main text: one can always move the $T$ gates to have them only act on a single qubit in succession, by adding suitable SWAP operators.

\begin{figure}[tbh]
    \centering
    \includegraphics[width=1\linewidth]{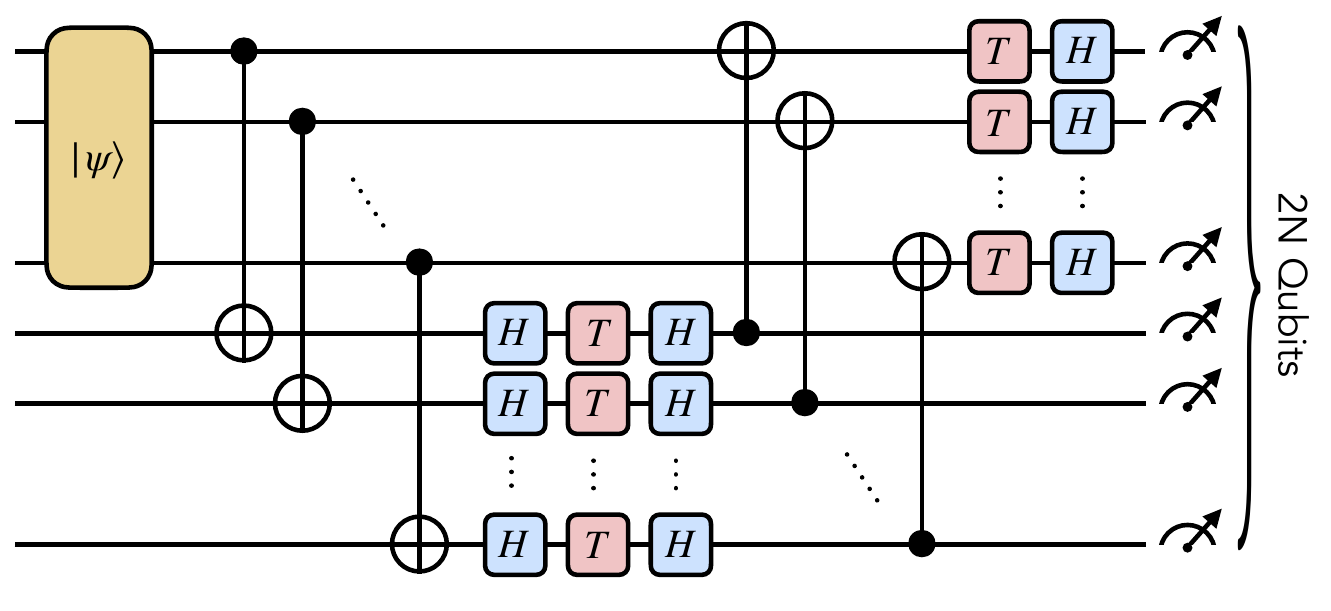}
    \caption{Example of $t=2n$-doped circuit over $n$ qubits, with $m=n$ ancillas, which gives an IC measurement.}
    \label{fig:2n-circ}
\end{figure}
The unitary operator corresponding to the circuit in \cref{fig:2n-circ} can be written as $U=\otimes_{n=1}^N U_n$ with
\begin{equation}
    U_n=(\text{H}\text{T})_n\text{CX}(n{+}N,n)(\text{HTH})_{n{+}N}\text{CX}(n,n{+}N)
\end{equation}
The elements of the POVM $\bs\mu$ implemented by the circuit is given by all the possible tensor product between operators 
\begin{equation}
\mu_{kj}=\text{Tr}\big\{(\mathbb{I}_n\otimes \bra{0}_{n+N})U_n^\dagger\mathbb{P}_{kj}^nU_n(\mathbb{I}_n\otimes \ket{0}_{n+N})\big\}
\end{equation}
with $k,j=\{0,1\}$. 
So it is sufficient to look at the invertibility of the frame operator for the case $n=1$, in this case given by
\begin{equation}
    F=\sum_{k,j=0}^{1}\ket*{\mu_{kj}}\!\!\bra*{\mu_{kj}}
    =\begin{pmatrix}
    1/2 & 0 & 0 & 0\\
    0 & 1/8 & 0 & 0\\
    0 & 0 & 1/8 & 0\\
    0 & 0 & 0 & 1/4\\
    \end{pmatrix}
\end{equation}

\section{Z-free centralizers}
\label{sec:Zfree_centralisers_count}

We now prove the general formula to count the number of $\ps{Z}$-free elements in the centraliser of a given Abelian group.

\begin{theorem}
Let $H\equiv \langle H_1,..., H_\ell\rangle \le \tildecalP_t$ an abelian subgroup of $t$-qubit Pauli strings with generators $H_i$, and let $C(H)\le\tildecalP_t$ be its centraliser.
Let $\tildecalQ\equiv\{\ps{I},\ps{X},\ps{Y}\}$, and let $\tildecalQ_t = \tildecalQ^{\times t}$ be the subset of $\ps{Z}$-free $t$-qubit strings.
Then the number of $\ps{Z}$-free elements in $C(H)$ is
\begin{equation}
    \lvert \tildecalQ_t\cap C(H)\rvert =
    \frac{1}{\lvert H\rvert}\sum_{h\in H}
    3^{n_I(h)}(-1)^{n_Z(h)}.
\end{equation}
\end{theorem}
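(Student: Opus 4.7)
The plan is to rewrite the count as a double sum using a standard character/indicator identity for the centraliser, then exchange the order of summation and exploit the fact that the symplectic inner product factorises over qubits.

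First, I would note that for any abelian subgroup $H\le\tildecalP_t$ and any $P\in\tildecalP_t$, commutation is encoded by a symplectic inner product $\langle P,h\rangle\in\{0,1\}$, well-defined modulo phases, and we have the standard identity
\begin{equation*}
    \mathbb{1}[P\in C(H)] \;=\; \frac{1}{\lvert H\rvert}\sum_{h\in H}(-1)^{\langle P,h\rangle},
\end{equation*}
since the map $h\mapsto(-1)^{\langle P,h\rangle}$ is a character of $H$, trivial iff $P\in C(H)$. Summing this indicator over $P\in\tildecalQ_t$ and swapping the sums gives
\begin{equation*}
    \lvert \tildecalQ_t\cap C(H)\rvert \;=\; \frac{1}{\lvert H\rvert}\sum_{h\in H}\;\sum_{P\in\tildecalQ_t}(-1)^{\langle P,h\rangle}.
\end{equation*}

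Next, the symplectic product factorises over qubits, $\langle P,h\rangle=\sum_k\langle P_k,h_k\rangle\pmod 2$, and $\tildecalQ_t=\tildecalQ^{\times t}$, so the inner sum becomes a product of single-qubit sums,
\begin{equation*}
    \sum_{P\in\tildecalQ_t}(-1)^{\langle P,h\rangle} \;=\; \prod_{k=1}^{t}\sum_{Q\in\tildecalQ}(-1)^{\langle Q,h_k\rangle}.
\end{equation*}
I would then evaluate the per-qubit factor by cases on $h_k$: if $h_k=\ps{I}$ all three terms contribute $+1$ giving $3$; if $h_k=\ps{X}$ the terms $\ps{I},\ps{X},\ps{Y}$ contribute $+1,+1,-1$, giving $1$; if $h_k=\ps{Y}$ we get $+1,-1,+1$, again $1$; and if $h_k=\ps{Z}$ we get $+1,-1,-1$, giving $-1$. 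Collecting factors yields exactly $3^{n_I(h)}(-1)^{n_Z(h)}$, which substituted back gives the claimed formula.

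The only subtlety to be careful about is that we are working in the quotient $\tildecalP_t$, so I should briefly justify that commutation is well-defined modulo phases (it is, since conjugating out the $\pm1,\pm i$ does not change the sign in $Ph=\pm hP$), and that the character identity therefore makes sense on $\tildecalP_t$. I do not foresee any real obstacle beyond this bookkeeping; the result is essentially a one-line computation once the factorisation is set up.
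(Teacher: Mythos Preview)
Your proposal is correct and follows essentially the same approach as the paper: both use the character/indicator identity $\mathbb{1}[P\in C(H)]=\lvert H\rvert^{-1}\sum_{h\in H}(-1)^{\langle P,h\rangle}$, swap the sums, factorise over qubits, and evaluate the single-qubit sum by cases. The only cosmetic difference is that the paper derives the indicator identity by expanding $\prod_i \delta_{P\in C(H_i)}$ rather than invoking the character argument directly.
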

\begin{proof}
An equivalent way to write the number of such elements is
\begin{equation}
    \lvert \tildecalQ_t\cap C(H) \rvert
    = \sum_{P\in\tildecalS_t} \prod_{i=1}^\ell\delta_{P\in C(H_i)},
\end{equation}
where $\delta_{P\in C(H_i)}=1$ iff $[P,H_i]=0$.
Note that any $s\in\{0,1\}$ can be equivalently written as $s=\frac{1+(-1)^{s+1}}{2}$, and furthermore that $\delta_{P\in C(H_i)}=1-\langle P,H_i\rangle$, where we defined the \textit{symplectic inner product} such that $\langle P,Q\rangle=0$ iff $[P,Q]=0$ and $\langle P,Q\rangle=1$ iff $\{P,Q\}=0$. This also satisfies $\langle \prod_i H_i,P\rangle=\sum_i \langle H_i,P\rangle$, and $\langle H,P\rangle=\sum_{k=1}^t \langle H_k, P_k\rangle$ with $H_k,P_k$ the single-qubit operators. Thus
\begin{equation}
    \delta_{P\in C(H)}=\prod_{i=1}^\ell\delta_{P\in C(H_i)}
    = \frac1{2^\ell}\sum_{h\in H}(-1)^{\langle h, P\rangle},
\end{equation}
and summing over $P\in\tildecalS_t$,
\begin{equation}\label{eq:Zfree_count}
\begin{aligned}
    \lvert\tildecalS_t\cap C(H)\rvert
    &= \frac{1}{2^{\ell}}\sum_{h\in H}
    \sum_{P\in\tildecalS_t}\prod_{k=1}^t
    (-1)^{\langle h_k,P_k\rangle}
    \\
    &= \frac1{2^\ell}\sum_{h\in H}
    \prod_{k=1}^t \sum_{P\in\{I,X,Y\}}
    (-1)^{\langle h_k,P \rangle}
    \\
    &= \frac{1}{\lvert H\rvert}\sum_{h\in H}
    3^{n_I(h)}(-1)^{n_Z(h)}.
\end{aligned}
\end{equation}
Thus the number of $\ps{Z}$-free strings in the centraliser of an abelian group $H$ equals the average of $3^{n_I(h)} (-1)^{n_Z(h)}$ over the elements of $H$.
\end{proof}


\parTitle{Single generator}
When there is a single generator, $H=\langle H_1\rangle=\{I,H_1\}$,~\cref{eq:Zfree_count} reduces to
\begin{equation}
    \lvert\tildecalS_t\cap C(H)\rvert =
    \frac12(3^t+3^{n_I(H_1)}(-1)^{n_Z(H_1)}).
\end{equation}
For example, the centralizer of $H=IIZX$ contains $\frac{3^4 - 3^2}{2}=36$ $\ps{Z}$-free elements, $H=IZZX$ has $\frac{3^4+3^2}{2}=45$, and $H=IIII$ has $\frac{3^4+3^4}{2}=3^4$.

\parTitle{Two generators}
If $H=\langle H_1,H_2\rangle$, we get
\begin{equation}
\begin{gathered}
    \frac14\big[3^t + 3^{n_I(H_1)}(-1)^{n_Z(H_1)}
    + 3^{n_I(H_2)}(-1)^{n_Z(H_2)}
    \\+ 3^{n_I(H_1 H_2)}(-1)^{n_Z(H_1 H_2)}
    \big],
\end{gathered}
\end{equation}
where we notice that $n_I(H_1 H_2)$ is also equal to the number of positions where $H_1$ and $H_2$ have the same operator, while $n_Z(H_1 H_2)$ is the number of posititions where the two generators have one of the pairs $(I,Z),(Z,I),(X,Y),(Y,X)$.
For example if $H_1=IIXZ$ and $H_2=XZII$, then we get $\frac{3^4-2\times 3^2 + 1}{4}=16$.

\section{Examples}
\label{sec:examples}

We give here some explicit example applications of~\cref{thm:structural_clifford_povms_plus}.
\begin{example}[label=ex:thm2_1]{}{}
Let $\calS\equiv\langle \ps{ZZI},\ps{ZIZ},\ps{XXX}\rangle$ with $n=2$, $m=1$, and $\ket\psi=\ket+$, corresponding to $\calZ=\langle \ps{X}\rangle$, $\calZ'=\langle \ps{IIX}\rangle$.

In this case $\calS\cap \calZ'=\{\ps{I}\}$, because $\ps{IIX}\notin\calS$, but $\calS\cap C(\calZ')=\langle \ps{ZZI}, \ps{XXX}\rangle$. In the notation of~\cref{thm:structural_clifford_povms_plus} we have $\ell=0$ because of the trivial intersection, the generator of $\calS$ anticommuting with $\calZ'$ is $\tilde g_1= \ps{ZIZ}$, a pair of generators of $\calS$ commuting with $\calZ'$ but not in it is $g_1=\ps{ZZI}$, $g_2=\ps{XXX}$, and finally $\tilde h_1=\ps{X}$ is in $\calZ'$ but not in $\calS$.

The observables that are reconstructed by the measurement are the linear span of the projections of the generators $g_k$, namely $\pi_n(g_1)=\ps{ZZ}$ and $\pi_n(g_2)=\ps{XX}$.
In other words, the effective POVM is equivalent to measuring the stabilizer basis specified by the group $\langle \ps{ZZ},\ps{XX}\rangle$. More explicitly, it is an eight-outcome POVM with the four distinct elements $\frac12(\frac{I\pm ZZ}{2})(\frac{I\pm XX}{2})$, each repeated twice.

Using the same $\calS$ but now with $n=1, m=2$, and $\calZ=\langle \ps{XX},\ps{YY}\rangle$, we see that $\calS\cap\calZ'=\langle \ps{IZZ}\rangle$, $\calS\cap C(\calZ')=\langle \ps{XXX}\rangle$, $\tilde g_1=\ps{ZZI}$. Thus the effective POVM has elements $\frac12(\frac{I\pm X}{2})$, each one repeated twice, and four other vanishing elements.
\end{example}

\begin{example}[label=ex:thm2_2]{}{}
Let $\calS=\langle \ps{ZXZZY}, \ps{XIIIZ}, \ps{XIIZZ}, \ps{ZXYZX}, \ps{ZIYZX}\rangle$, with $n=2$, $m=3$ and $\calZ=\langle \ps{ZZI},\ps{ZIZ},\ps{XXX}\rangle$. Then $\calS\cap C(\calZ')=\langle \ps{IXIII},\ps{XIIZZ} \rangle$, which has trivial intersection with $\calZ'$. Thus the effective measurement is informationally equivalent to measuring $\langle \ps{XI},\ps{IX}\rangle$ on $\calH_n$, and has elements $\frac18(\frac{I\pm XI}{2})(\frac{I\pm IX}{2})$ each repeated eight times.
\end{example}

We now consider some examples to illustrate the relation between double coset decompositions and supported $\dimspanmu$ discussed in~\cref{thm:entanglement_and_reconstruction,cor:rank_multiples}.
\begin{example}[label=ex:ent1]{Double coset decompositions}{}
    Let $\calS=\langle \ps{XIII},\ps{IIXI},\ps{IXIX},\ps{IYIY}\rangle$, $n=m=2$.
    Then we have the decomposition~\cref{eq:entanglement_relations} with $\calS_n=\langle\ps{XIII}\rangle$, $\calS_m=\langle\ps{IIXI}\rangle$, $a_1=b_1=\ps{XI}$, $g_1^{(n)}=g_1^{(m)}=\ps{IX}$, $\bar g_1^{(n)}=\bar g_1^{(m)}=\ps{IY}$. Thus $p=1$, and the maximal supported $\dimspanmu$ is $2^3$, compatibly with the three independent operators on $\calH_n$: $\ps{XI}$, $\ps{IX}$, and $\ps{IY}$.
    
    Explicitly, the double-coset decomposition discussed in~\cref{cor:rank_multiples} in this case results in $4$ cosets, each one containing $2$ distinct operators on the $\calH_n$ side:
    \begin{equation}
    \begin{aligned}
        \calS/\calS_n\calS_m=\{
            &\{\ps{II},\ps{XI}\}\times\{\ps{II},\ps{XI}\}, \\
            &\{\ps{IX},\ps{XX}\}\times\{\ps{IX},\ps{XX}\}, \\
            &\{\ps{IY},\ps{XY}\}\times\{\ps{IY},\ps{XY}\}, \\
            &\{\ps{IZ},\ps{XZ}\}\times\{\ps{IZ},\ps{XZ}\}
        \}.
    \end{aligned}
    \end{equation}
    For a given $\ket\psi$ to achieve $\dimspanmu=2^3$ it must have nonzero expectation value on at least one element of each of the three non-identity $\calH_m$ cosets, namely: $\{\ps{IX},\ps{XX}\}$, $\{\ps{IY},\ps{XY}\}$, and $\{\ps{IZ},\ps{XZ}\}$.
    One such example is $\mathbb{P}_\psi=\mathbb{P}_0\otimes \frac{I+(X+Y+Z)/\sqrt3}{2}$ which by construction has nonzero expectation values on all local Paulis on its second qubit.
    By contrast, using $\ket\psi=\ket T^{\otimes2}$ would instead only give $\dimspanmu=6$, because $\{\ps{IZ},\ps{XZ}\}$ does not survive.
    And using a stabiliser state as projection, for example $\ket\psi=\ket{00}$, gives $\dimspanmu=2^2$ because $\{\ps{IZ},\ps{XZ}\}$ is the only (non-identity) coset that survives the projection.
\end{example}

\begin{example}[label=ex:ent2]{Double coset decompositions}{}
Consider
\begin{equation*}
\begin{aligned}
    \calS
    &=\langle \ps{ZXZZY}, \ps{XIIIZ}, \ps{XIIZZ}, \ps{ZXYZX}, \ps{ZIYZX}\rangle \\
    &=\langle \ps{IXIII}, \ps{IIIZI}, \ps{IIXIZ},
    \ps{ZXZZY}, \ps{XIIIZ} \rangle,
\end{aligned}
\end{equation*}
with $n=2$, $m=3$.
Then $\calS_n=\langle \ps{IXIII}\rangle$, $\calS_m=\langle \ps{IIIZI}, \ps{IIXIZ}\rangle$, $a_1=\ps{IX}$, $b_1=\ps{IZI}$, $b_2=\ps{XIZ}$, $g_1^{(n)}=\ps{ZX}$, $g_1^{(m)}=\ps{ZZY}$, $\bar g_1^{(n)}=\ps{XI}$, $\bar g_1^{(m)}=\ps{IIZ}$, and thus $p=1$. The double coset decomposition reads
\begin{equation*}
\begin{aligned}
    \calS/\calS_n\calS_m = \{
    &\{\ps{II},\ps{IX}\} \times \{\ps{III},\ps{IZI},\ps{XIZ},\ps{XZZ}\}, \\
    &\{\ps{ZX},\ps{ZI}\}\times
    \{\ps{ZZY},\ps{ZIY},\ps{YZX},\ps{YIX}\}, \\
    &\{\ps{XI},\ps{XX}\}\times
    \{\ps{IIZ},\ps{IZZ},\ps{XII},\ps{XZI}\}, \\
    &\{\ps{YX},\ps{YI}\}\times
    \{\ps{ZZX},\ps{ZIX},\ps{YZY},\ps{YIY}\}
    \}.
\end{aligned}
\end{equation*}
The maximal supported $\dimspanmu$ is $2^{n+p}=2^3$, achieved \textit{e.g.} with $\ket\psi=\ket T^{\otimes3}$, and each of these four cosets contains at least one $\ps{Z}$-free element in its $\calH_m$ component.
\end{example}

\begin{example}[label=ex:ent3]{Double coset decompositions}{}
    Let $\calS=\langle \ps{IXZ},\ps{XYY},\ps{YZY}\rangle$, $n=1$, $m=2$.
    Then $\calS_n=\{\ps{I}\}$, $\calS_m=\langle \ps{IXZ}\rangle$, hence $a_1=\ps{IX}$, $b_1=\ps{IXI}$, $b_2=\ps{XIZ}$, $g_1^{(n)}=\ps{X}$, $g_1^{(m)}=\ps{YY}$, $\bar g_1^{(n)}=\ps{Y}$, $\bar g_1^{(m)}=\ps{ZY}$. In this case $p=n=1$ thus we have maximal entanglement.
    The double coset decomposition reads
    \begin{equation*}
    \begin{aligned}
        \calS/\calS_n\calS_m = \{
        &\{\ps{I}\} \times \{\ps{II},\ps{XZ}\},
        \{\ps{X}\} \times \{\ps{YY},\ps{ZX}\} \\
        &\{\ps{Y}\} \times \{\ps{ZY},\ps{YX}\},
        \{\ps{Z}\} \times \{\ps{XI},\ps{IZ}\}
        \}.
    \end{aligned}
    \end{equation*}
    and thus the maximal supported $\dimspanmu$ is $2^{n+p}=4$, corresponding to an IC measurement, and is achievable with $\ket\psi=\ket T^{\otimes 2}$.
\end{example}

Consider now an explicit example of how the coset $\calS_m$ is sufficient to characterize $\dimspanmu$ for any given $\ket\psi$, as discussed in~\cref{sec:entanglement_and_reconstruction}:

\begin{example}[label=ex:ent4]{}{}
    Let $n=1, m=2$, $\calS_m=\langle \ps{IXZ}\rangle$. Considering the quotient of the centraliser of the projection $\pi_m(\calS_m)=\langle\ps{XZ}\rangle$ over itself, we get
    \begin{equation*}
    \begin{aligned}
        C(\pi_m(\calS_m))/\pi_m(\calS_m) = 
        \{&\{\ps{II},\ps{XZ}\}, \{\ps{XI},\ps{IZ}\},
        \\
        &\{\ps{YY},\ps{ZX}\}, \{\ps{YX},\ps{ZY}\}
        \}.
    \end{aligned}
    \end{equation*}
    Then, as previously discussed, regardless of how each of these cosets is attached to a Pauli operator on $\calH_n$, we immediately know that there are four independent cosets and thus projecting on random states we get an IC-POVM.
    And similarly we know that projecting onto $\ket T^{\otimes2}$ still gives an IC-POVM, because all four cosets contain $\ps{Z}$-free elements.
\end{example}
\begin{example}[label=ex:ent5]{}{}
    Let $n=1, m=2$, $\calS_m=\langle \ps{IXX}\rangle$. Then
    \begin{equation*}
    \begin{aligned}
        C(\pi_m(\calS_m))/\pi_m(\calS_m) = 
        \{&\{\ps{II},\ps{XX}\}, \{\ps{XI},\ps{IX}\},
        \\
        &\{\ps{YY},\ps{ZZ}\}, \{\ps{YZ},\ps{ZY}\}
        \}.
    \end{aligned}
    \end{equation*}
    Thus with $\ket\psi=\ket T^{\otimes2}$ we get $\dimspanmu=3$, because $\{\ps{YZ},\ps{ZY}\}$ gets annihilated by the projection.
\end{example}

We include here a full proof of the $n=m=2$ case of~\cref{conj:maxentmaxrank}:
\begin{example}[label=ex:ent6]{}{}
    We will prove the conjecture in the special case of $n=m=2$.
    The approach used here is, however, extremely \textit{ad-hoc} --- and somewhat unsatisfying. It is likely that some more general property of pure states would need to be used to handle the general case.

    Consider the special case with $n=m=2$, $p=1$. Then
    $\calS=\langle
    \ps{ZIII}, \ps{IIZI},
    \ps{IXIX}, \ps{IYIY}
    \rangle$, and the effective measurement operators read
    \begin{equation*}
        (Z^a\otimes P)\,
        \langle\psi|
        Z^b\otimes P
        |\psi\rangle,
    \end{equation*}
    with $a,b\in\{0,1\}$, $P\in\{I,X,Y,Z\}$.
    If instead $p=2$, the measurement operators are $(P\otimes Q)\, \langle\psi|P\otimes Q|\psi\rangle$ for all $16$ combinations of $P,Q\in\{I,X,Y,Z\}$.
    
    If there are $r=2$ nonzero expectation values, then $\dimspanmu=2r=4$, which we know from~\cref{lemma1} is the smallest possible $\dimspanmu$, so certainly increasing to $p=2$ does not decrease $\dimspanmu$.
    If instead $r=3$, then $\dimspanmu=6$, but then $\ket\psi$ is not a stabilizer state, and thus $\dimspanmu'\ge 6$.
    Finally, if $r=4$, then
    $|\langle IX\rangle|+|\langle ZX\rangle|$,
    $|\langle IY\rangle|+|\langle ZY\rangle|$,
    $|\langle IZ\rangle|+|\langle ZZ\rangle|$, are nonzero.
    Any pure two-qubit state can be written as $\rho=\frac{I+A}{4}$ where
    \begin{equation*}
    \begin{gathered}
        A =
        \sum_{i=1}^3 a_i(\sigma_i\otimes I)+\sum_{j=1}^3 b_j(I\otimes\sigma_j)
        +
        \sum_{i,j=1}^3 T_{ij}\sigma_i\otimes\sigma_j,
    \end{gathered}
    \end{equation*}
    with the coefficients $(a_i)_i$ characterising the reduced state $\rho_A$, $(b_j)_j$ characterising $\rho_B$, and $T_{ij}$ characterising the correlations.
    For pure states these coefficients are such that $Tb=a$, $b=T^T a$, $TT^T=(1-\|a\|^2)I+aa^T$, $T^T T=(1-\|b\|^2)I+bb^T$.
    Thus
    \begin{itemize}
    \item If the state is separable then the total number of nonzero expectation values is at least $8$, because the assumption forces $\rho_B$ to have nonzero expectation values on all four Paulis $\{I,X,Y,Z\}$, and purity forces $\rho_A\neq I/2$.
    \item If $b_i\neq 0$ for any $i$, then the $i$-th column of $T$ must also be nonzero.
    Therefore $3$ nonzero $b_i$ imply $3$ nonzero $T_{ij}$, and at least one nonzero $a_i$, hence $r'\ge 8$.
    \item If $b=0$, by assumption $T_{3j}\neq0$ for all $j$, $a=0$, and $T\in\mathbf{SO}(3)$. Thus if $T$ has a fully nonzero row, then all its elements must be nonzero, hence $r'\ge10>8$.
    \item If $\rho$ is entangled then the constraints on $TT^T$ and $T^T T$ force $T$ to have pairwise orthogonal and nonzero rows (and columns). This is sufficient to handle the rest of the cases.
    For example if the only nonzero local terms are $b_1,a_1$, then $T_{32},T_{33}\neq 0$, and then at least other 3 nonzero elements of $T$ must be nonzero for its rows to be nonzero and orthogonal.
    The other cases are handled similarly.
    \end{itemize}
\end{example}

We now provide some application examples of the gadget formalism discussed in~\hyperlink{ref:tdoped_part1}{section V}, and in particular of how Heisenberg-evolved measurement states for $t$-doped circuits can be characterized via stabilizer groups with some fixed syndromes.

\begin{example}[label=ex:tdoped_1]{}{}
Let $n=m=1$ and $t=2$. Suppose the measurement, before projecting onto $\ket\psi$, is $\calS=\langle \ps{ZIII},\ps{IZII},\ps{IIXX},\ps{IIYY} \rangle$, with fixed syndromes $\ps{IIXX}=\ps{IZII} =1$.
Projecting the ancilla onto any stabilizer state fixes $\ps{IZII}$.
This is therefore a best-case scenario in the terminology above, meaning that the ancilla projection did not affect any of the two free generators, which here are $\ps{ZIII}$ and $\ps{IIYY}$. The remaining measurement on the $n+t=3$ qubits is thus described by $\langle \ps{ZII},\ps{IXX},\ps{IYY}\rangle$ with $\ps{IXX}=1$.
\end{example}

\begin{example}[label=ex:tdoped_2]{Explicit circuit producing $\calS$}{}
Consider $n=m=1$, $t=2$, the $t$-doped circuit
$U=H_1 T_1 \mathrm{CX}_{2\to1} H_2 T_2 H_2 \mathrm{CX}_{1\to2}$, computational basis measurements at the output, and initial ancilla state $\ket\psi=\ket0$.
This is the circuit pictured in~\cref{fig:explicit_gadget_example}.

The physical measurement in the gadget picture is described by $\langle \ps{ZIII},\ps{IZII},\ps{IIZI},\ps{IIIZ}\rangle$, with fixed syndromes $\ps{IIZI}=\ps{IIIZ}=1$. Evolving these operators through the circuit gives
\begin{equation*}
\begin{aligned}
    &\tt \ps{ZIII} \to \ps{XXIX},
    \qquad \ps{IZII} \to \ps{ZZXI}, \\
    &\tt \ps{IIZI} \to \ps{IXZI},
    \qquad \ps{IIIZ} \to \ps{IZXZ},
\end{aligned}
\end{equation*}
and thus $\calS=\langle  \ps{XXIX}, \ps{ZZXI},\ps{IXZI},\ps{IZXZ}\rangle$ with $\ps{IXZI}=\ps{IZXZ}=1$ describes the measurement before projection.
Using the result of~\cref{thm:structural_clifford_povms_plus}, given $\calZ'=\langle \ps{IZII}\rangle$, we observe that $\calS\cap\calZ'=\{\ps{I}\}$ and $\calS\cap C(\calZ')=\langle \ps{XIZX}, \ps{ZZXI},\ps{IZXZ} \rangle$. Thus projecting the ancilla on $\ket0$ (or $\ket1$) gives a measurement on the $n+t$ qubits characterised by $\langle \ps{XZX}, \ps{ZXI}, \ps{IXZ}\rangle$ with $\ps{IXZ}=1$.
In particular, these are maximally entangled states with $p=1$, and induce the coset decomposition
\begin{equation*}
\begin{aligned}
    \calS/\calS_t=\{
    &\tt \{\ps{III},\ps{IXZ}\},\{\ps{ZXI},\ps{ZIZ}\}, \\
    &\tt \{\ps{XYY},\ps{XZX}\}, \{\ps{YYX},\ps{YZY}\}\}.
\end{aligned}
\end{equation*}
Note in particular that there is a single $\ps{Z}$-free operator in each coset, which means that projecing on $\ket{T}^{\otimes 2}$ all cosets survive, hence the resulting measurement is IC.
\end{example}

\begin{example}[label=ex:tdoped_3]{}{}
    Going back to~\cref{ex:ent4}, where $n=2$, $t=1$, $\calS=\langle \ps{ZZI},\ps{ZIZ},\ps{XXX}\rangle$,~\cref{thm:rank_tlen} predicts $\dimspanmu=2^2(\frac32)=6$. Thus using a single $T$ gate gives a value of $\dimspanmu$ between the $2^2=4$ obtained for stabilizer ancillas, and the $2^4=16$ achievable with optimal initial ancillas.
    To see the theorem in action more explicitly, we have $\calS_n=\langle\ps{ZZI}\rangle$, hence
    \begin{equation*}
    \begin{aligned}
        \calS/\calS_n = \{
        &\{\ps{III}, \ps{ZZI}\},
        \{\ps{ZIZ}, \ps{IZZ}\}, \\
        &\{\ps{XXX}, \ps{YYX}\},
        \{\ps{YXY}, \ps{XYY}\}\}.
    \end{aligned}
    \end{equation*}
    Looking at the last qubit in each coset, it becomes evident that $3$ out of the $4$ cosets survive the projection onto $\ket T$, which entirely annihilates the coset $\{\ps{ZIZ},\ps{IZZ}\}$. Hence $\dimspanmu=6$.
    This then generalises because under the assumption of maximal entanglement and $t\le n$, it is always true that each coset corresponds to a unique element $\pi_t(g)$, and that all strings in $\calH_t$ appear in some coset.
\end{example}

\end{document}